\documentclass[aps,english,notitlepage,nofootinbib]{revtex4}

\usepackage{amstext,amsmath,amssymb,amsfonts,bbm}
\usepackage{epsfig}
\usepackage{hyperref}
\usepackage{amsthm}
\usepackage{subfig}
\usepackage{wrapfig}
\usepackage{color}
\usepackage{multirow}
\usepackage{soul}

%
%
\usepackage[latin9]{inputenc}
\usepackage{babel}

\usepackage{enumitem}

\usepackage[font={footnotesize,it}]{caption}
\usepackage{titlesec}
\titleformat*{\paragraph}{\bf\small}

\usepackage{array}
\usepackage{mathtools}
\usepackage{amsmath}
\usepackage{amsthm}
\usepackage{amssymb}
\usepackage{cancel}


\theoremstyle{plain}
\newtheorem{lem}{\protect\lemmaname}
\theoremstyle{plain}

\theoremstyle{definition}
\newtheorem{defn}{\protect\definitionname}

\theoremstyle{plain}
\newtheorem{thm}{\protect\theoremname}
\theoremstyle{plain}
\newtheorem*{lem*}{\protect\lemmaname}
\theoremstyle{plain}
\newtheorem*{cor*}{\protect\corollaryname}
\theoremstyle{plain}
\newtheorem*{thm*}{\protect\theoremname}
\theoremstyle{plain}
\newtheorem{prop}{Proposition}

\providecommand{\corollaryname}{Corollary}
\providecommand{\definitionname}{Definition}
\providecommand{\lemmaname}{Lemma}

\providecommand{\theoremname}{Theorem}


\usepackage{import}
\usepackage{tikz}
\usetikzlibrary{tikzmark}

\newcommand{\bea}{\begin{eqnarray}}
\newcommand{\eea}{\end{eqnarray}}

\newcommand{\be}{\begin{equation}}
\newcommand{\ee}{\end{equation}}

\begin{document}
\begin{flushleft}
KCL-PH-TH/2018-7
\par\end{flushleft}

\date{\today}

\title{Minimizers of the dynamical Boulatov model}

\author{Joseph Ben Geloun}
\email{jobengeloun@lipn.univ-paris13.fr}

\affiliation{LIPN, UMR CNRS 7030, Institut Galil\'ee, Universit\'e Paris 13, Sorbonne Paris Cit\'e, 99, avenue Jean-Baptiste Cl\'ement, 93430 Villetaneuse, France}

\affiliation{International Chair in Mathematical Physics and Applications, ICMPA-UNESCO Chair, 072Bp50, Cotonou, Benin}

\author{Alexander Kegeles}
\email{kegeles@aei.mpg.de}

\affiliation{Max-Planck-Institute for Gravitational Physics (Albert-Einstein-Institute),\\
Am  M\"uhlenberg 1, 14476 Potsdam-Golm, Germany}

\author{Andreas G. A. Pithis}
\email{andreas.pithis@kcl.ac.uk}

\affiliation{Department of Physics, King's College London, University of London, \\
Strand WC2R 2LS, London, United Kingdom}

\affiliation{Max-Planck-Institute for Gravitational Physics (Albert-Einstein-Institute),\\
Am  M\"uhlenberg 1, 14476 Potsdam-Golm, Germany}

\begin{abstract}
We study the Euler-Lagrange equation of the dynamical Boulatov model which is a simplicial model for 3d Euclidean quantum gravity augmented by a Laplace-Beltrami operator. We provide all its solutions on the space of left and right invariant functions that render the interaction of the model an equilateral tetrahedron. Surprisingly, for a non-linear equation of motion, the solution space forms a vector space. This space distinguishes three classes of solutions: saddle points, global and local minima of the action. Our analysis shows that there exists one parameter region of coupling constants for which the action admits degenerate global minima.
\end{abstract}

\keywords{}

\pacs{}
\maketitle

\section{Introduction}

In three dimensions, general relativity can be formulated as a $BF$-theory~\citep{Baez:1999sr}. Its functional integral quantization discretized over simplicial complexes leads to the Ponzano-Regge model~\cite{Ponzano:1969aa,Barrett:2008wh}, which can be regarded as a quantum gravity model of discrete geometry. A cornerstone of this approach then is to recover continuum geometries with all desired requirements and properties of a three-dimensional spacetime. Such a description, however, as well as a mechanism, which could successfully sort it, remains an open problem in background independent approaches to quantum gravity.

The Boulatov model of group field theory (GFT)~\citep{Freidel:2005qe,Oriti:2006se} provides one way to address this issue for Euclidean signature. The model is formally defined by the generating functional,
\begin{equation}\label{Z}
  Z\left[J\right]
  =\int\mathcal{D}\varphi\,e^{-S\left(\varphi\right)+\int J\varphi},
\end{equation}
where $S\left(\varphi\right)$ denotes the Boulatov action~\citep{Boulatov:1992vp}.
The striking fact about this generating functional is that its Feynman graphs correspond to simplicial complexes and its Feynman amplitudes coincide with Ponzano-Regge spin foam amplitudes~\cite{Ponzano:1969aa,Barrett:2008wh}.
One concludes that a perturbative expansion of Eq.~\eqref{Z} provides a discrete model of quantum gravity and that a description of continuum geometries will require a non-perturbative understanding of Eq.~\eqref{Z}.

The construction of a full non-perturbative quantum field theory is rarely possible, but often it is already enough to construct a perturbation theory around a non-perturbative vacuum~\citep{Dittrich:1985tr}.
Moreover, if quantum fluctuations are not too strong, a non-perturbative vacuum can be reasonably well approximated by the minimum of the classical action $S$, called the minimizer. In that case, the mean-field approximation around the minimizer prompts an effective field theory that will capture the non-perturbative regime of the model. For that reason, a study of minimizers of the Boulatov action is an important step towards a better understanding of continuous quantum geometries.

Despite their importance, however, the extrema of the Boulatov action are poorly understood in the literature. This is mostly because the Euler-Lagrange equations of the Boulatov action are non-linear differential equations that also involve integrals. Such equations are called integro-differential equations; generally, they are notoriously difficult to solve. In the Boulatov model, these integro-differential equations can be formulated as integral equations with an integral kernel given by the Wigner $6J$-symbol. A solution of the extremal equations then requires full control of the zeros of the $6J$-symbol, which remains an open problem despite decades of research~\cite{Lindner:1985aa,brudno1985nontrivial,brudno1987nontrivial,A-Heim:2009aa}. This makes the complete analysis of the  problem out of reach.

In addition to this, there seems to be no consensus on the signs of the coupling constants in GFT models. For instance, the convention used in renormalization analyses~\citep{Carrozza:2016vsq} is opposite to the one used in the context of the GFT condensate cosmology investigations~\citep{Gielen:2013kla,Gielen:2013naa,Gielen:2014ila,Oriti:2016qtz,Oriti:2016ueo,deCesare:2016rsf,Pithis:2016wzf,Pithis:2016cxg,deCesare:2017ynn}. Despite this ambiguity in the sign convention both analyses rely on the existence of global or at least local minimizers and for that reason require a good understanding of the extrema in GFT.

In this work, we address the minimizers of the Boulatov action augmented by a Laplace-Beltrami operator, hereafter called dynamical Boulatov action~\citep{Geloun:2013zka}. To make the problem tractable, we look for minimizers in the space of left and right invariant fields corresponding to equilateral triangles.  Section~\ref{sec:Boulatovmodel} gives the definition of the model and the space of functions considered in this article. On this space the Euler-Lagrange equations of the action become solvable, allowing us to provide a full characterization of solutions in section~\ref{subsec:Extrema-of-the action}.
We then identify the parameter regimes in which the action admits minima and characterize the minimizers in section~\ref{subsec:Minimizer-of-the action}.
Our main result regarding the extrema is presented in theorem~\ref{thm:extrema of dynamical Boulatov action} and the subsequent discussion. The characterization of minimizers is provided in theorem~\ref{thm:minimizeres}. Implications of our results on the quantum theory are discussed in section~\ref{sec:discussion}. A closing appendix gathers useful identities and the proofs of some statements in the text.

\section{The dynamical Boulatov model}
\label{sec:Boulatovmodel}

This section reviews the construction of the Boulatov model and
sets up our notations. We assume the reader to be familiar with the harmonic
analysis on $\mathrm{SU}\left(2\right)$ (appendix~\ref{sec:appendix}
however, reports useful notions on this topic).

Let $\mathcal{C}^{\infty}\left(M\right)$ be the space of smooth, real-valued functions defined on the compact Lie group $M=\mathrm{SU}\left(2\right)^{\times3}$.
The components of elements of $M$ are denoted by a subindex such that
$ x=\left(x_{1},x_{2},x_{3}\right)\in M $.
Define the space  $\mathcal{S}$ of right and cyclic invariant functions. That
is functions $f$ in   $\mathcal{C}^{\infty} \left( M \right)$ that satisfy  {\bf right invariance}: for any $R\in\mathrm{SU}\left(2\right)$ and
any $x\in M$, $f\left(x_{1}R,x_{2}R,x_{3}R\right)=f\left(x_{1},x_{2},x_{3}\right);$ and {\bf cyclicity}: for any $x\in M$,
$  f\left(x_{1},x_{2},x_{3}\right)=f\left(x_{2},x_{3},x_{1}\right)=f\left(x_{3},x_{1},x_{2}\right)$.

\noindent The dynamical Boulatov action is a functional $S_{m,\lambda}$ on $\mathcal{S}$, given by the integral
\bea\label{eq:action}
  S_{m,\lambda} \left( \varphi \right)=
   \frac{1}{2} \int_{M} \text{d}x~
    \varphi \left( x \right)
    \left( -\Delta+m^{2} \right)
    \varphi \left( x \right)
   + \frac{\lambda}{4!}
    \int_{M^{\times4}} \text{d}x \text{d}y \text{d}z \text{d}w~
    \text{Tet} \left( x,y,z,w \right)  \,
   \varphi\left(x\right)\varphi\left(y\right)\varphi\left(z\right)
    \varphi\left(w\right),
    \eea
where $m^2$ and $\lambda$ are real, possibly negative, coupling constants, $\text{d}x$ is the Haar measure on $M$, $\Delta$ is the Laplace-Beltrami operator on $M$ with the canonical metric,\footnote{We include the Laplace-Beltrami operator in the action, for a consistent implementation of a renormalization scheme~\citep{Geloun:2013zka}.}  and  the integral kernel \text{Tet} is given by
\be
  \text{Tet}\left(x,y,z,w\right)
 =
    \delta\left(x_{1}y_{1}^{-1}\right)\delta\left(x_{2}z_{1}^{-1}\right)\delta\left(x_{3}w_{1}^{-1}\right)
    \delta\left(y_{2}w_{3}^{-1}\right)\delta\left(y_{3}z_{2}^{-1}\right)
\delta\left(z_{3}w_{2}^{-1}\right).
\ee
This kernel encodes the combinatorics of a tetrahedron (Fig.~\ref{pic:tet}) and is symmetric under cyclic permutations of its arguments.
\begin{figure}[h!]
 
  \def\svgwidth{0.3\columnwidth}
  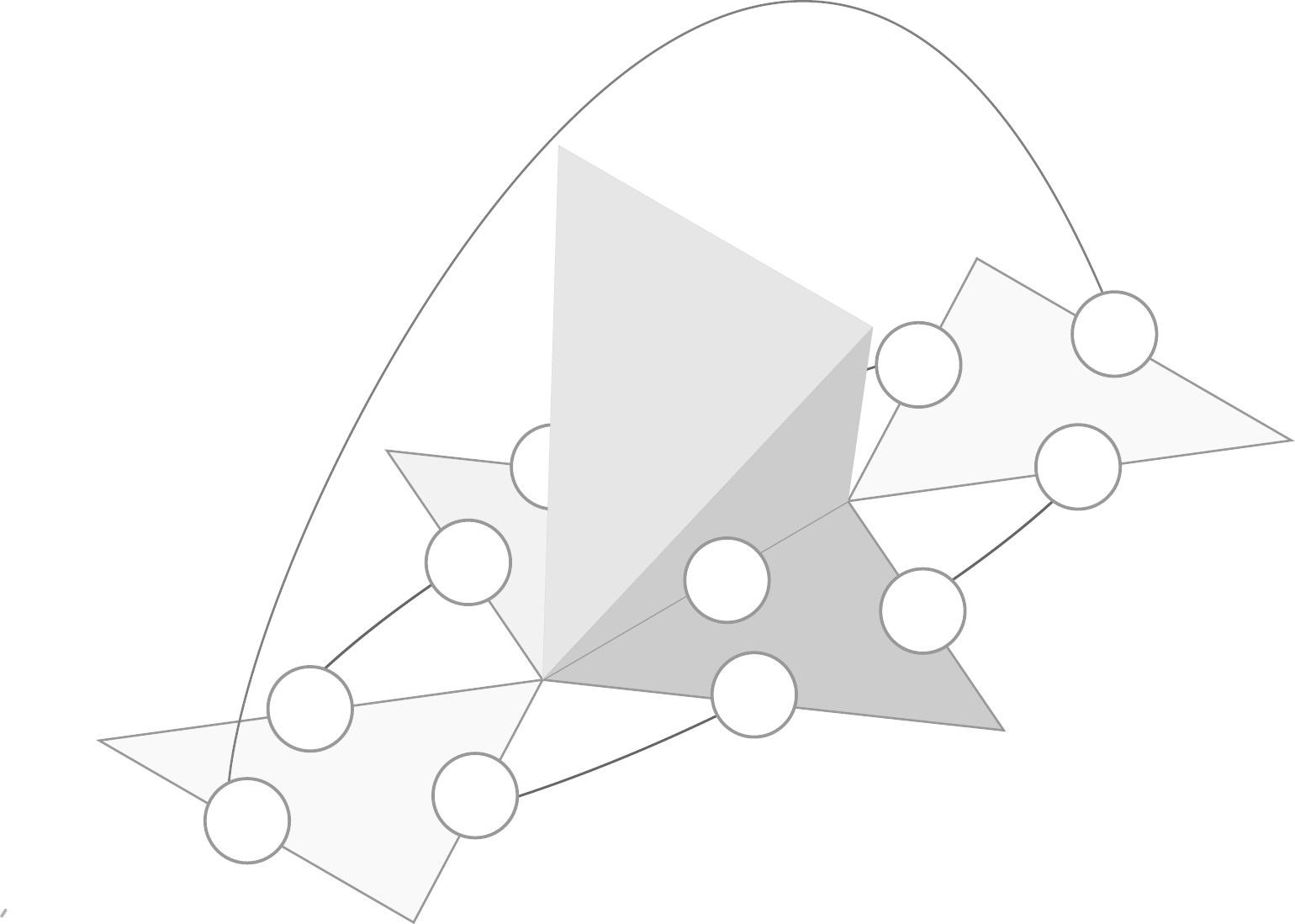
\caption{Combinatorics of a tetrahedron}
  \label{pic:tet}
\end{figure}

To address the variational problem, we  topologize $\mathcal{S}$ by the family of semi-norms
$  \|f\|_{n}\doteq\sup_{x\in M}\left|\Delta^{n}f\left(x\right)\right|$, with the neighborhood base given by semi-balls~\citep{sugiura1971},
$ N_{\epsilon,n}\left(0\right)=\left\{ \|f\|_{n}<\epsilon\,\vert\,f\in\mathcal{S}\right\} ,$
  for $n\in\mathbb{N}$ and $\epsilon>0$.

Leading the  analysis further, we will restrict the space $\mathcal{S}$
by requiring $(1)$ {\bf left invariance:}  for any $L\in\mathrm{SU}\left(2\right)$, $x\in M$ and $f\in\mathcal{S}$, $f\left(Lx_{1},Lx_{2},Lx_{3}\right)=f\left(x_{1},x_{2},x_{3}\right)$. By the Peter-Weyl theorem, every $f\in\mathcal{S}$ can then be written as
\begin{equation}
  f\left(x\right)=\sum_{J\in\mathfrak{J}}f^{J}\,\mathcal{X}^{J}\left(x\right),
  \qquad  \mathcal{X}^{J}\left(x\right) \doteq \frac{\sqrt{d_{j_{1}}d_{j_{2}}d_{j_{3}}}}{3}
\int\text{d}h\: \sum_{\sigma\in\text{Cycl}}\chi^{j_{\sigma(1)}}\left(x_{1}h\right)\chi^{j_{\sigma(2)}}\left(x_{2}h\right)\chi^{j_{\sigma(3)}}\left(x_{3}h\right),
  \label{eq:Peter-Weyl decomposition on S}
\end{equation}
where $J=\left(j_{1},j_{2},j_{3}\right)$ belongs to $\mathfrak{J}\doteq\left(\frac{\mathbb{N}}{2}\right)^{\times3}$, Cycl denotes cyclic permutations of the set \{1,2,3\} and
$\chi^{j_{i}}$ denotes the character of an $\mathrm{SU} \left( 2 \right)$ representation of
dimension $d_{j}=2j+1$ for $j\in \frac{\mathbb{N}}{2}$. By~\citep[theorem 3]{sugiura1971} the sequence of coefficients $\left(f^{J}\right)_{ J\in \mathfrak{J}}$ is a rapidly decreasing sequence of real numbers and the equality is understood such that the right hand side of the equation converges to $f\left(x\right)$ in the aforementioned topology.

Furthermore, for such functions we require $(2)$ the {\bf equilateral condition:}  let $f\in\mathcal{S}$, then $f$ is an
equilateral function if its non-vanishing Peter-Weyl coefficients
are of the form $\left(f^{\left(j,j,j\right)}\right)_{j\in\mathbb{N}}$.

We denote the restriction of $\mathcal{S}$ to left invariant equilateral
functions by $\mathcal{S}_{\text{EL}}$ and the space of equilateral
triples by $\mathfrak{J}_{\text{EL}}=\left\{ \left(j,j,j\right)\,\vert\,j\in\mathbb{N}\right\} $.
Note, that $\mathfrak{J}_{\text{EL}}$ contains only
integer multi-indices, since for any half-integer $j$ the matrix
coefficients vanish:
\begin{align*}
\mathcal{X}^{\left(j,j,j\right)} & =0\quad\text{with }j=\frac{2n+1}{2}\quad n\in\mathbb{N}.
\end{align*}
In the following we will sometimes use the notation $f\in\mathcal{S}_{\left(\text{EL}\right)}$
and $f^{J}$ with $J\in\mathfrak{J}_{\left(EL\right)}$ to signal
that the statement holds equally for $\mathcal{S}$ and
$\mathcal{S}_{\text{EL}}$ and, correspondingly, with a set of indices
belonging to $\mathfrak{J}$ or to $\mathfrak{J}_{\text{EL}}$. For clearer notation we also define the square of the triple $J$ as
$
J^{2}  \doteq j_{1}\left(j_{1}+1\right)+j_{2}\left(j_{2}+1\right)+j_{3}\left(j_{3}+1\right),
$ and its modulus as
$ \left|J\right|\doteq j_{1}+j_{2}+j_{3}. $

\begin{defn}
\label{def:minimizer}
A local minimizer of the action $S_{m,\lambda}$ on $\mathcal{S}_{\text{EL}}$ is a field $\varphi\in\mathcal{S}_{\text{EL}}$, that for some $n\in\mathbb{N}$ and $\epsilon>0$ satisfies
\begin{equation}
S_{m,\lambda}\left(\phi\right)\geq S_{m,\lambda}\left(\varphi\right),
\label{eq:minimizer condition}
\end{equation}
for any $\phi\in N_{\epsilon,n}\left(\varphi\right)\cap\mathcal{S}_{\text{EL}}$.
If condition~\eqref{eq:minimizer condition} is satisfied on the whole space $\mathcal{S}_{\text{EL}}$ we call the minimizer global.
\end{defn}
In the following we will characterize all minimizers of the action $S_{m,\lambda}$ on $\mathcal{S}_{\text{EL}}$
for the four different parameter regions
\begin{align*}
(a)\:m^{2} & <0 &
(b)\:m^{2} & >0 &
(c)\;m^{2} & >0 &
(d)\;m^{2} & <0\\
\lambda & <0 &
\lambda & <0 &
\lambda & >0 &
\lambda & >0.
\end{align*}

For each of the parameter regions, we will characterize all extrema of the action $S_{m,\lambda}$ on $\mathcal{S}_{\text{EL}}$ and identity, which of the extrema are minimizers.

We now briefly motivate the restrictions made in our analysis and point out the geometrical considerations behind the use of the space $\mathcal{S}_{\text{EL}}$.

\medskip

\noindent{\bf The space $\mathcal{S}$:}
By the Peter-Weyl theorem we can decompose any smooth field $f$ on $M$ in modes such that
$
f\left(x\right)=\sum_{ J\in \mathfrak{J}}\sum_{\alpha,\beta=-J}^{J}\,f_{\alpha,\beta}^{J}\,\mathfrak{D}_{\alpha,\beta}^{J}\left(x\right),
$
where $\mathfrak{D}^{J}=\mathfrak{D}^{(j_{1},j_{2},j_{3})}=D^{j_{1}}\otimes D^{j_{2}}\otimes D^{j_{3}},$ are the Wigner-matrix coefficients
for the product representation of $M$.
To gain intuition on the construction, we depict the Peter-Weyl
coefficients by stranded lines, emanating from a single point (Fig.~\ref{a}). Then
the right invariance of $f$ ensures a closure of the dual edges to
form a triangle (Fig.~\ref{b}). Hence, the right invariance
is necessary to give a geometric interpretation to the fields and
it is thus crucial for the connection between the Boulatov group field
theory and the Ponzano-Regge spin-foam model~\citep{Boulatov:1992vp,Baez:1999sr,Ponzano:1969aa}. In addition, the invariance under cyclic relabeling of the field arguments ensures that the ordering of the field arguments has no physical meaning.\footnote{Notice that by imposing invariance with respect to cyclic permutations of the field arguments, we strictly follow the original definition of the Boulatov model~\cite{Boulatov:1992vp}. In later reformulations of the model this property is dropped while an additional combinatorial degree of freedom called color is attributed to the fields to guarantee that the perturbative expansion of the model is free of topological pathologies~\cite{Freidel:2009hd,Gurau:2010nd,Gurau:2009tw,Bonzom:2012hw}.}
 
\begin{figure}
 \subfloat[][Peter-Weyl mode \label{a}]{
  \def\svgwidth{0.2\columnwidth}
  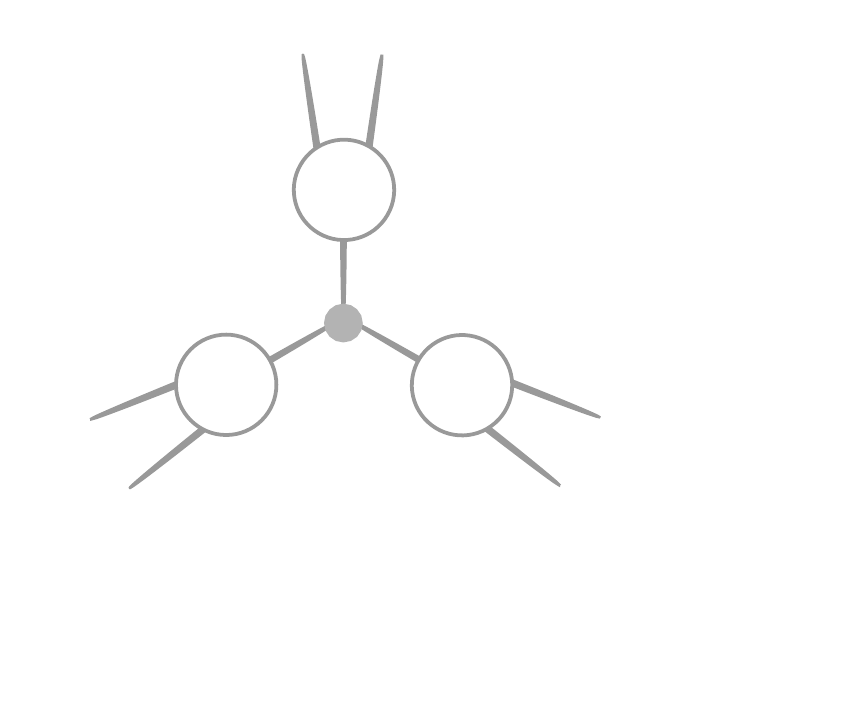
} \qquad
 \subfloat[][Mode with left invariance \label{b}]{
  \def\svgwidth{0.2\columnwidth}
  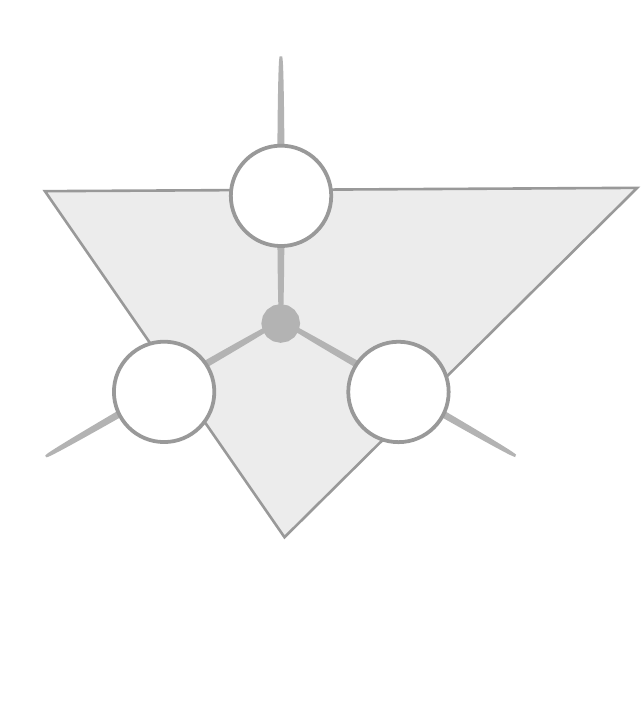
} \qquad
 \subfloat[][Mode with left and right invariance \label{c}]{
  \def\svgwidth{0.19\columnwidth}
  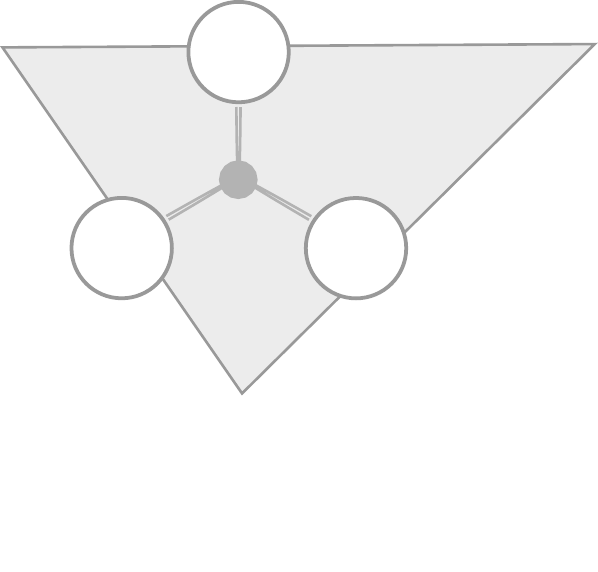
} \qquad
 \subfloat[][Equilateral mode with left and right invariance \label{d}]{
  \def\svgwidth{0.15\columnwidth}
  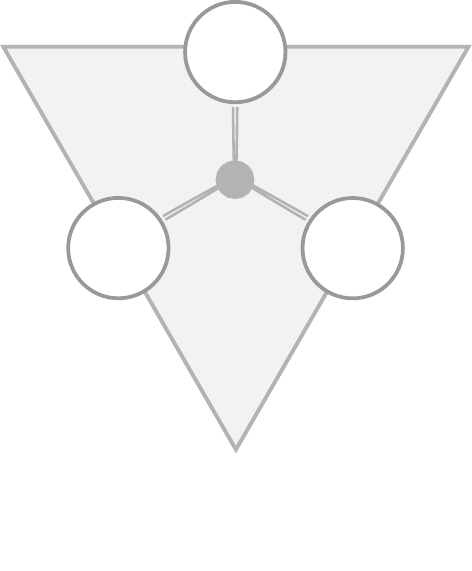
}
 \caption{Graphical representation of Peter-Weyl modes}
\label{pic:left-invariance}
\end{figure}

\medskip

\noindent{\bf The space $\mathcal{S}_{\textrm{EL}}$:}
To enforce rotational symmetry of the triangles, one requires left invariance of the field~\citep{Gielen:2014ila,Oriti:2016qtz}, such that for any $h\in\mathrm{SU}\left(2\right)$ the field $f$ satisfies
$
f\left(hx_{1},hx_{2},hx_{3}\right)=f\left(x\right)
$, (Fig.~\ref{d}).
Applications of GFT to quantum cosmology demonstrate that this symmetry is needed to identify the domain space of the fields with the superspace of homogeneous spatial geometries~\citep{Gielen:2014ila}.

The equilateral condition  on functions ensures  that their modes correspond to equilateral triangles. This condition can relate to isotropy in quantum cosmology studies of GFT, reflecting that we need to set all edges of the triangle to equal length in order to ensure equality in all directions (Fig.~\ref{d}).
It is crucial for the recovery of a Friedmann-like dynamics from GFT~\citep{Oriti:2016qtz,Oriti:2016ueo,Pithis:2016cxg,deCesare:2017ynn}.\footnote{Notice that for the subsequent analysis of extrema on $\mathcal{S}_{\textrm{EL}}$ the cyclicity property mentioned above has no impact and could in principle be lifted from the outset.}\footnote{The restriction to equilateral configurations bears strong resemblance to what is done in the closely related contexts of dynamical triangulations~\cite{Ambjorn:2001cv,Ambjorn:2013tki} and tensor models for quantum gravity~\cite{Gurau:2016cjo,gurau2016random} where the use of standardized building blocks --by universality arguments-- is believed not to affect the continuum results.}


Besides these arguments, there is also an algebraic reason to consider the restricted space $\mathcal{S}_{\text{EL}}$.
In GFT the action $S_{m,\lambda}$ defines statistical weights of
a generating functional using a functional integral~\eqref{Z}.
It has been shown, however, that on $\mathcal{S}$ the action $S_{m,\lambda}$
is not bounded from below, regardless of the parameter region~\citep{Freidel:2002tg,magnen2009scaling}. For this reason, the above integral is dominated by those field configurations that make the action $S_{m,\lambda}$
arbitrarily negative making Eq.~\eqref{Z} ill-defined. As we will show below, this problem gets resolved on $\mathcal{S}_{\text{EL}}$, where global minimizers of the action exist (at least for some parameter regions). This allows us to define  the generating functional perturbatively, and could lead to a well-defined statistical theory.\footnote{One can bound the Boulatov action by adding a so-called pillow term to the action~\cite{Freidel:2002tg}. We leave the impact of such a modification onto the ensuing analysis to future investigations.}

\section{Extrema and minimizers}
\label{sec:extremal condition}
\label{sec:extrema and minimizers}

Let $I\subset\mathbb{R}$ denote an
interval containing zero; for $t\in I$ and $\varphi,f\in\mathcal{\text{\ensuremath{\mathcal{S}_{\text{EL}}}}}$
a necessary condition for $\varphi$ to be a local minimizer on $\mathcal{S}_{\text{EL}}$
is given by
\begin{align}
S_{m,\lambda}^{'}\left(\varphi,f\right) &\doteq\partial_{t}S_{m,\lambda}\left(\varphi+tf\right)\vert_{0}=0,\label{eq:extremality} \\
S_{m,\lambda}^{''}\left(\varphi, f \right) &\doteq \partial_{t}^{2} S_{m,\lambda} \left(\varphi + t f \right)\vert_{0}\geq 0, \label{eq:neccessary condition for minimizers}
\end{align}
for any $f\in\mathcal{S}_{\text{EL}}$.

In the following we will  investigate the extremal condition~\eqref{eq:extremality} for the model~\eqref{eq:action}.
We will then check  if some solutions are minimal and thus fulfill~\eqref{eq:neccessary condition for minimizers} and the condition in definition~\ref{def:minimizer}.
\begin{prop}
\label{cor:equation of motion if Fourier coefficients}$\varphi\in\mathcal{S}_{\left(\rm{EL}\right)}$
is an extremum of $S$ if and only if the Peter-Weyl coefficients
of $\varphi$ --- denoted $\varphi^{J}$ --- satisfy for any $J\in\mathfrak{J}_{\left(\rm{EL}\right)}$,
\begin{align}
\label{FourierBoulatoveom}
 (J^{2}+m^{2})\varphi^{J}
 & +\frac{\lambda}{3!}\sum_{K \in \mathfrak{J}_{\left(\text{EL} \right)}
}\varphi^{j_{1}k_{2}k_{3}}\varphi^{j_{2}k_{3}k_{1}}\varphi^{j_{3}k_{1}k_{2}}\begin{Bmatrix}j_{1} & j_{2} & j_{3}\\
k_{1} & k_{2} & k_{3}
\end{Bmatrix}^{2} =0
\end{align}
where $K=(k_1,k_2,k_3) \in \mathfrak{J}_{\left(\text{EL} \right)}$.
\end{prop}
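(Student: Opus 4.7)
The plan is to compute the Gâteaux derivative $S'_{m,\lambda}(\varphi,f)=\partial_t S_{m,\lambda}(\varphi+tf)\vert_0$ for an arbitrary test function $f\in \mathcal{S}_{(\text{EL})}$ and then translate the weak form $S'_{m,\lambda}(\varphi,f)=0$ into a condition on the Peter-Weyl coefficients $\varphi^{J}$. Throughout the calculation I would expand both $\varphi$ and $f$ in the basis $\mathcal{X}^{J}$ of equation~\eqref{eq:Peter-Weyl decomposition on S} (restricted to $J\in\mathfrak{J}_{\text{EL}}$ in the EL case) and treat the quadratic and quartic pieces of $S_{m,\lambda}$ separately.

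For the kinetic part, differentiation produces $\int_{M} f(x)(-\Delta+m^{2})\varphi(x)\,\mathrm{d}x$. Since the characters $\chi^{j}$ of $\mathrm{SU}(2)$ are eigenfunctions of the Laplace-Beltrami operator with eigenvalue $-j(j+1)$, the basis functions satisfy $-\Delta\mathcal{X}^{J}=J^{2}\mathcal{X}^{J}$ with $J^{2}=j_{1}(j_{1}+1)+j_{2}(j_{2}+1)+j_{3}(j_{3}+1)$. Plugging in the expansions and invoking the orthogonality of the $\mathcal{X}^{J}$ (an identity which is collected in the appendix referenced by the paper) collapses the kinetic piece to $\sum_{J} f^{J}(J^{2}+m^{2})\varphi^{J}$.

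For the interaction, the cyclic symmetry of $\text{Tet}$ under permutations of its four arguments makes the four terms produced by $\partial_{t}$ coincide, giving $\frac{\lambda}{3!}\int_{M^{\times 4}}\text{Tet}(x,y,z,w)\,f(x)\varphi(y)\varphi(z)\varphi(w)\,\mathrm{d}x\,\mathrm{d}y\,\mathrm{d}z\,\mathrm{d}w$. The technical heart of the argument is then the evaluation of $\int\text{Tet}\,\mathcal{X}^{J}\mathcal{X}^{K_{1}}\mathcal{X}^{K_{2}}\mathcal{X}^{K_{3}}$. Writing each $\mathcal{X}$ as a character integrated against a closure projector and unpacking the six delta functions of $\text{Tet}$ reduces the integral to a tetrahedral spin network whose evaluation is the Wigner $6J$-symbol; performing the remaining Haar integrations and accounting for the two intertwiner insertions produces its square, with the multi-indices of the neighbouring $\mathcal{X}$'s getting paired as $\varphi^{j_{1}k_{2}k_{3}}\varphi^{j_{2}k_{3}k_{1}}\varphi^{j_{3}k_{1}k_{2}}$ dictated by the edge combinatorics of Fig.~\ref{pic:tet}. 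Matching these index patterns with the kernel is the main obstacle; I would handle it diagrammatically in the spirit of the standard derivation of the Ponzano-Regge amplitude.

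Combining the two pieces yields $S'_{m,\lambda}(\varphi,f)=\sum_{J\in\mathfrak{J}_{(\text{EL})}} f^{J}\,\mathcal{E}^{J}(\varphi)$, where $\mathcal{E}^{J}(\varphi)$ is the left-hand side of \eqref{FourierBoulatoveom}. Because $f$ ranges freely over $\mathcal{S}_{(\text{EL})}$ and, by the Peter-Weyl isomorphism with rapidly decreasing sequences, its coefficients $f^{J}$ can be prescribed arbitrarily (e.g.\ take $f=\mathcal{X}^{J}$), the vanishing of $S'_{m,\lambda}(\varphi,f)$ for every admissible $f$ is equivalent to $\mathcal{E}^{J}(\varphi)=0$ for every $J\in\mathfrak{J}_{(\text{EL})}$. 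This is exactly the stated equation of motion, and the argument is an equivalence, giving both directions of the "if and only if".
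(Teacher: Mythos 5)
Your proposal follows essentially the same route as the paper: reduce the extremality condition $S'_{m,\lambda}(\varphi,f)=0$ to the basis directions $\mathcal{X}^{J}$ via the Peter--Weyl expansion of $f$, evaluate the kinetic term using the Laplace--Beltrami eigenvalue $J^{2}$ and orthogonality, and evaluate the Tet integral against four basis functions to produce the squared $6J$-symbol, exactly as in the paper's lemma plus the appendix identity. The only step the paper makes explicit that you elide is the dominated-convergence justification for interchanging the infinite Peter--Weyl sums with the integrals, which is routine on the compact group $M$ and does not affect the correctness of your argument.
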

\begin{proof}
See appendix~\ref{subsec:proof for corollary 2}.
\end{proof}

The extremal condition~\eqref{FourierBoulatoveom} is a non-linear
tensor equation with an integral kernel given by the $6J$-symbol squared. To this issue adds the fact that the non-trivial zeros of the $6J$-symbol are still under investigation, making~\eqref{FourierBoulatoveom} inherently difficult to solve in full generality.
Some specific solutions for the case without the Laplace-Beltrami operator and $\lambda<0$ have been introduced in Ref.~\citep{Fairbairn:2007sv},
however, a systematic analysis of extrema was not performed therein.

Although the extremal condition~\eqref{FourierBoulatoveom}
is difficult to solve on $\mathcal{S}$,
it turns out to be  solvable on $\mathcal{S}_{\text{EL}}$,
because in this case the $6J$-symbol significantly simplifies.

\subsection{Extrema}
 \label{subsec:Extrema-of-the action}

In the following we will denote the Wigner $6J$-symbol for
$J\in\mathfrak{J}_{\left(\text{EL}\right)}$ by
$
\left\{ 6J\right\} \doteq\begin{Bmatrix}j_{1} & j_{2} & j_{3}\\
j_{1} & j_{2} & j_{3}
\end{Bmatrix},
$
and define the space $\mathfrak{J}_{\left(\text{EL}\right)}^{S}$
of  $J$'s such that
$
\mathfrak{J}_{\left(\text{EL}\right)}^{S}=\left\{ J\in\mathfrak{J}_{\left(\text{EL}\right)}\:\vert\:\text{with}\,\left\{ 6J\right\} \neq0\right\} .
$
In order to characterize the extrema of the action,
we define the space of extremal sequences. Let $C=\left(C^{J}\right)_{J\in\mathfrak{J}_{\left(\text{EL}\right)}}$
denote the sequence of (possibly complex) numbers such that for $J\in \mathfrak{J}^{S}_{\left( \text{EL} \right)}$
\begin{equation}
C^{J} \in \left\{ 0, \pm\frac{1}{\left|\left\{ 6J\right\} \right|}\sqrt{-\frac{3!}{\lambda}\left(J^{2}+m^{2}\right)} \right\}
\label{eq:coefficient solution}
\end{equation}
and for $J\in\mathfrak{J}_{\left(\text{EL}\right)}/\mathfrak{J}_{\left(\text{EL}\right)}^{S}$
\begin{equation}
C^{J} =
\begin{cases}
r \in \mathbb{R} & \text{if $J^2 = - m^2$} \\
0 & \text{otherwise}
\label{eq:coefficient solution outside of J}
\end{cases}
\end{equation}
Since $J^2 > 0$, the first case in~\eqref{eq:coefficient solution outside of J} can happen only when $m^2$ is negative and for $J \in \mathfrak{J}_{\text{EL}}$, $m^{2}$ has to be an even integer.
For simplicity,  we will exclude this case in the following analysis, because it requires a fine-tuning on the parameter $m^2$.
It is convenient to define the length $\ell$ of the sequence $C$
such that
\begin{equation}
\ell\left(C\right)=\sum_{J\in\mathfrak{J}_{\text{EL}}}\left|\text{sgn}\left(C^{J}\right)\right|,
\end{equation}
with the convention $\text{sgn}\left(0\right)=0$.
\begin{defn}\label{def:eml}
We define the space of extremal sequences as
\[
\mathcal{E}_{m,\lambda}=\left\{ C = \left( C^{J}\right)_{J\in \mathfrak{J}_{\text{EL}}}\,\vert\,C^{J}\in\mathbb{R},\,\ell\left(C\right)<\infty\right\} ,
\]
where the coefficients of each sequence are of the form~\eqref{eq:coefficient solution}.

\end{defn}
This space of course depends on the values of $m^{2}$ and $\lambda$, since  different choices of these parameters may violate the reality
condition $C^{J}\in\mathbb{R}$.
$\mathcal{E}_{m,\lambda}$ fully characterizes the space of extrema
of the action as states the following theorem.
\begin{thm}
\label{thm:extrema of dynamical Boulatov action}For any $C\in\mathcal{E}_{m,\lambda}$
the field $\varphi\in\mathcal{S}_{\text{EL}}$
\begin{equation}
\varphi\left(x\right)=\sum_{J\in\mathfrak{J}_{\textrm{EL}}}C^{J}\,\mathcal{X}^{J}\left(x\right)\label{eq:solutions}
\end{equation}
is an extremum of the action $S_{m,\lambda}$. Moreover, every equilateral
extremum of $S_{m,\lambda}$ is of the above form.
\end{thm}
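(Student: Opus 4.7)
The plan is to translate the variational problem into a system of decoupled algebraic equations via Proposition~\ref{cor:equation of motion if Fourier coefficients}, solve each equation, and then identify which collections of solutions produce fields in $\mathcal{S}_{\text{EL}}$.

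First I would fix $J=(j,j,j)\in\mathfrak{J}_{\text{EL}}$ and examine how the cubic term in Eq.~\eqref{FourierBoulatoveom} simplifies on $\mathcal{S}_{\text{EL}}$. Each factor $\varphi^{j_1 k_2 k_3}$, $\varphi^{j_2 k_3 k_1}$, $\varphi^{j_3 k_1 k_2}$ vanishes unless its index triple is equilateral, and simultaneously imposing this on the three factors yields $j=k_2=k_3$, $j=k_3=k_1$, $j=k_1=k_2$, which together force $K=J$. The extremal equation therefore collapses to
\[
\varphi^J\Bigl[(J^2+m^2)+\tfrac{\lambda}{3!}(\varphi^J)^2\left\{6J\right\}^2\Bigr]=0,
\]
a cubic that decouples across different $J$.

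Next I would solve this cubic coefficient-wise. When $J\in\mathfrak{J}_{\text{EL}}^S$, i.e.\ $\left\{6J\right\}\neq 0$, the roots are $\varphi^J=0$ and $\varphi^J=\pm|\left\{6J\right\}|^{-1}\sqrt{-3!(J^2+m^2)/\lambda}$, real precisely when $-(J^2+m^2)/\lambda\geq 0$; this reproduces Eq.~\eqref{eq:coefficient solution}. When $\left\{6J\right\}=0$, the cubic term vanishes and only $(J^2+m^2)\varphi^J=0$ remains, giving $\varphi^J=0$ under the excluded fine-tuning $J^2\neq -m^2$, matching Eq.~\eqref{eq:coefficient solution outside of J}. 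Hence every coefficient of an extremum lies in the set prescribed by the definition of $\mathcal{E}_{m,\lambda}$, and conversely every such choice solves the cubic.

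To close the theorem I would reconcile this pointwise description with membership in $\mathcal{S}_{\text{EL}}$ via the finite-length condition. In the forward direction, any $C\in\mathcal{E}_{m,\lambda}$ has only finitely many non-zero entries by definition, so $\varphi=\sum_J C^J\mathcal{X}^J$ is a finite sum of smooth functions, trivially lies in $\mathcal{S}_{\text{EL}}$ with the prescribed Peter--Weyl data, and satisfies the decoupled cubics by construction. In the backward direction, any extremum $\varphi\in\mathcal{S}_{\text{EL}}$ has rapidly decreasing coefficients by~\citep[theorem 3]{sugiura1971}, whereas every non-zero admissible $|\varphi^J|$ is bounded below by $|\left\{6J\right\}|^{-1}\sqrt{-3!(J^2+m^2)/\lambda}$; since the Wigner $6j$-symbol is bounded in modulus, this lower bound diverges as $|J|\to\infty$, so rapid decrease forces only finitely many $\varphi^J$ to be non-zero, placing $(\varphi^J)_J$ in $\mathcal{E}_{m,\lambda}$.

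The only genuinely delicate step is the combinatorial collapse of the $K$-sum, where one must verify that the equilateral constraints on all three cubic factors collectively pin $K$ down to $J$; the remaining arguments are purely algebraic, aside from the mild analytic input (boundedness of $\left\{6J\right\}$ combined with rapid decrease) needed to translate the coefficient-wise characterization into the finite-length condition that defines $\mathcal{E}_{m,\lambda}$.
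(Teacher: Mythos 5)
Your proposal is correct and follows essentially the same route as the paper: reduce to the coefficient equation of Proposition~\ref{cor:equation of motion if Fourier coefficients}, collapse the $K$-sum to $K=J$ on the equilateral space to get a decoupled cubic whose roots are exactly \eqref{eq:coefficient solution}--\eqref{eq:coefficient solution outside of J}, and then invoke rapid decrease of Peter--Weyl coefficients to force all but finitely many $\varphi^J$ to vanish. The one (harmless) deviation is in that last step: the paper estimates $|C^J|\sim j/|\left\{6J\right\}|\sim j^{5/2}$ from the Ponzano--Regge asymptotics of the $6J$-symbol, whereas you only use $|\left\{6J\right\}|\leq 1$ to get a lower bound growing like $j$, which is slightly more elementary and suffices equally well.
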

\begin{proof}
See appendix~\ref{subsec:Proof-of-theorem 5}.
\end{proof}

We denote the space of extremal functions by $\mathcal{\tilde{E}}_{m,\lambda}$. It is worth mentioning that,  in spite of the non-linearity of the Euler-Lagrange equations, its solutions form a vector space over $\left(\mathbb{Z}_{3},+,\cdot\right)$.
\begin{cor*}
The space $\mathcal{\tilde{E}}_{m,\lambda}$ is a vector space over the discrete algebraic field $\left(\mathbb{Z}_{3},+,\cdot\right)$.
\end{cor*}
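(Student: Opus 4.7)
The plan is to exploit the structural fact from Theorem~\ref{thm:extrema of dynamical Boulatov action}: after excluding the fine-tuning case $J^2 = -m^2$, every extremum $\varphi \in \mathcal{\tilde{E}}_{m,\lambda}$ decomposes as $\varphi = \sum_J C^J \mathcal{X}^J$ with each coefficient $C^J$ lying \emph{independently} in the three-element set $\{0, +a^J, -a^J\}$, where $a^J := |\{6J\}|^{-1}\sqrt{-6(J^2+m^2)/\lambda}$ for $J \in \mathfrak{J}_{\text{EL}}^S$ (and $C^J = 0$ for $J \notin \mathfrak{J}_{\text{EL}}^S$, where the three-element set degenerates to $\{0\}$). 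This pointwise three-valued structure is what provides the match with $\mathbb{Z}_3$. Crucially, the naive real addition does \textbf{not} preserve $\mathcal{\tilde{E}}_{m,\lambda}$ since, e.g., $a^J + a^J = 2a^J$ is not admissible, so the vector-space operations must be defined through a different identification.

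For each $J \in \mathfrak{J}_{\text{EL}}$, fix the bijection $\iota_J : \{0,+a^J,-a^J\} \to \mathbb{Z}_3$ sending $0 \mapsto \bar{0}$, $+a^J \mapsto \bar{1}$, $-a^J \mapsto \bar{2}$. Define addition and $\mathbb{Z}_3$-scalar multiplication on $\mathcal{\tilde{E}}_{m,\lambda}$ coefficient-wise through these identifications:
\begin{equation*}
\varphi \oplus \psi \,:=\, \sum_J \iota_J^{-1}\!\bigl(\iota_J(C^J)+\iota_J(\tilde C^J)\bigr)\,\mathcal{X}^J,
\qquad
\bar n \odot \varphi \,:=\, \sum_J \iota_J^{-1}\!\bigl(\bar n \cdot \iota_J(C^J)\bigr)\,\mathcal{X}^J.
\end{equation*}
By construction the resulting coefficients lie in the admissible three-element sets, so $\varphi \oplus \psi$ and $\bar n \odot \varphi$ are again extremal by Theorem~\ref{thm:extrema of dynamical Boulatov action}. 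The finite-support condition $\ell(C) < \infty$ is also preserved: the support of $\varphi \oplus \psi$ is contained in the union of the supports of $\varphi$ and $\psi$, while $\odot$ either annihilates ($\bar n = \bar 0$), preserves ($\bar n = \bar 1$), or globally flips signs ($\bar n = \bar 2$) of the coefficients.

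What remains is to verify the vector-space axioms. Associativity and commutativity of $\oplus$, existence of the identity (the zero field) and of additive inverses ($\bar 2 \odot \varphi$, which is the extremum obtained by flipping every sign), both distributivity laws, and compatibility of scalar multiplication with the ring structure all reduce, pointwise in $J$, to the corresponding identities in $(\mathbb{Z}_3,+,\cdot)$, which hold because this is a field. The main conceptual obstacle is not the verification itself, which is mechanical once the definitions are in place, but rather the recognition that the $\mathbb{Z}_3$-structure arises from the symmetric three-valued nature of each coefficient and \emph{not} from any a priori linear structure on $\mathcal{\tilde{E}}_{m,\lambda}$; one must define $\oplus$ by the $\mathbb{Z}_3$-identification above rather than by ordinary superposition of fields.
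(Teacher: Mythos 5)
Your proposal is correct and follows essentially the same route as the paper: the paper defines a map $\mathcal{I}$ sending an extremum to the $\mathbb{Z}_3$-valued sequence of signs of its coefficients, identifies its image with a subspace of $c_{00}(\mathbb{Z}_3)$, and transports the vector-space operations back via $\varphi_1 +_{\mathbb{Z}_3} \varphi_2 := \mathcal{I}^{-1}(\mathcal{I}(\varphi_1)+\mathcal{I}(\varphi_2))$, which is precisely your coefficient-wise identification $\iota_J$ of $\{0,+a^J,-a^J\}$ with $\mathbb{Z}_3$. Your explicit observation that naive real addition fails to preserve the solution space, and your handling of the degenerate case $J \notin \mathfrak{J}_{\text{EL}}^S$, match the paper's treatment.
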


\begin{proof}
Denote the space of sequences with finitely many non-zero elements
over $\mathbb{Z}_{3}$ by $c_{00}\left(\mathbb{Z}_{3}\right)$.
It is a vector space over $\mathbb{Z}_{3}$.
Consider the map
\begin{align*}
\mathcal{I}: \mathcal{\tilde{E}}_{m,\lambda}
& \to c_{00}\left(\mathbb{Z}_{3}\right)\\
\varphi & \mapsto\left(\text{sgn}\left(C^{1}\right),\text{sgn}\left(C^{2}\right),\ldots\right),
\end{align*}
with the convention $\text{sgn}\left(0\right)=0$. $\mathcal{I}$  is one-to-one on its image, however, it may not be onto  $c_{00}\left(\mathbb{Z}_{3}\right)$
simply because the non-trivial zeros of the $6J$-symbol are not fully
characterized. Nevertheless, the image of $\mathcal{I}$ is algebraically closed and forms a subspace of $c_{00}\left(\mathbb{Z}_{3}\right)$.
For any $s=\left(s_{0},s_{1},\ldots\right)\in\mathcal{I}\left( \mathcal{\tilde{E}}_{m,\lambda}\right)$, the inverse mapping is given by
\begin{equation*}
\mathcal{I}^{-1}:s\mapsto
[\mathcal{I}^{-1}s](x)=
 \sum_{j\in\mathbb{N}}\text{sgn}\left(s_{j}\right)\,\left|C^{J_{j}}\right|\,\mathcal{X}^{J_{j}}\left(x\right),
\end{equation*}
where $J_j = (j,j,j)$, $j \in \mathbb{N}$,
with
\begin{equation}
\left|C^{J}\right| = \frac{1}{\left|\left\{ 6J\right\} \right|}\left|\sqrt{-\frac{3!}{\lambda}\left(J^{2}+m^{2}\right)}\right|.
\end{equation}
Since there are only finitely many non-zero coefficients, $s_{j}\neq0$,
the sum trivially converges in $\mathcal{S}_{\text{EL}}$. Since $\mathcal{I}$ is linear it is an isomorphism between $\mathcal{\tilde{E}}_{m,\lambda}$ and $\mathcal{I}\left( c_{00} \left( \mathbb{Z}_{3} \right)\right)$.

We define the sum on $\mathcal{\tilde{E}}_{m,\lambda}$ by
\begin{equation}
  \varphi_{1} +_{\mathbb{Z}_3} \varphi_{2} \doteq \mathcal{I}^{-1} \left( \mathcal{I}(\varphi_{1}) + \mathcal{I}(\varphi_{2})\right) .
\end{equation}
\end{proof}

 We now discuss the space of extremal sequences according to
 different  parameter regions, whose major
difference is captured by the sign of
the radicand in~\eqref{eq:coefficient solution}. We obtain the
four cases:
\begin{description}
\item [{$(a) \ m^{2}<0,~\lambda<0$}] the radicand is positive only if
\begin{equation}
J^{2}-\left|m^{2}\right| = 3j(j+1)-\left|m^{2}\right| \geq0,
\end{equation}
which is the case when $j$ satisfies
\begin{equation}
j_{\min} = \Bigg{\lceil}\,\frac{1}{6}\left(\sqrt{9+12|m^{2}|}-3\right)\,\Bigg{\rceil}\leq j,\label{eq:lower bound on spin}
\end{equation}
where $\lceil\cdot\rceil$ denotes the ceiling function. The space
of extremal sequences contains infinitely many sequences of the form
\[
\left(0,\ldots,0,C^{J_{\min}},C^{J_{\min}+1},\ldots\right),
\]
where we used the notation $J_{\min}+n\doteq\left(j_{\min}+n,j_{\min}+n,j_{\min}+n\right)$ for $n\in\mathbb{N}$,
with finitely many non-zero elements $C^{J}$.
\item [{$(b) \ m^{2}>0,~\lambda<0$}]
 all coefficients $C^{J}$ are real. The space of extremal sequences
can be written as
\begin{align*}
\mathcal{E}_{m,\lambda} & =\left\{ \left(C^{(0,0,0)},C^{(1,1,1)},\ldots\right)\,\vert\,\ell\left(C\right)<\infty\right\} .
\end{align*}
\item [{$(c) \ m^{2}>0,~\lambda>0$}]
the reality condition $C^{J}\in\mathbb{R}$ then requires $C^{J}=0$
for all $J\in\mathfrak{J}_{\text{EL}}$. The space of extremal sequences
contains a single zero-sequence
\[
\mathcal{E}_{m,\lambda}=\left\{ \left(0,0,0,\ldots\right)\right\} .
\]
\item [{$(d) \ m^{2}<0,~\lambda>0$}] the radicand is positive only if
\begin{equation}
3j\left(j+1\right)-\left|m^{2}\right|\leq0,\label{eq:upper bound}
\end{equation}
or equivalently for $j$ satisfying,
\begin{equation}
0\leq j\leq\Bigg{\lfloor}\,\frac{1}{6}\left(\sqrt{9+12\left|m^{2}\right|}-3\right)\,\Bigg{\rfloor}= j_{\max}\label{eq:Region for bounded J-1}
\end{equation}
where $\lfloor\cdot\rfloor$ denotes the floor function. In this case
$\mathcal{E}_{m,\lambda}$ contains finitely many sequences of the
form
\[
\left(C^{\left(0,0,0\right)},\ldots,C^{J_{\max}},0,0,\ldots\right),
\]
where $J_{\max}= \left( j_{\max},j_{\max},j_{\max} \right)\in\mathfrak{J}_{\text{EL}}$.
\end{description}

At this point, a few comments are in order: according to the geometrical interpretation in the previous section, each Fourier mode represents a triangle with the edge length $j$ and the area proportional to $J^{2}$. In the parameter regime $(d)$ relation~\eqref{eq:upper bound} provides an upper bound on the possible $j$'s for the extrema of the action. Hence, in this case, $\vert m^{2} \vert$ can be interpreted as the bound on the area of
the triangles determined by the extremal solutions.
This is an interesting geometrical fact that deserves further investigation.

A second remark is that the method of resolution restricting to equilateral configurations used to tackle ~\eqref{FourierBoulatoveom} certainly exports to GFT models
on higher dimensional manifolds $M=G^{\times D}$ with $G=\mathrm{SU}(2),\mathrm{SO}(4)$ and $D\in \mathbb{N}$. We expect that a similar result as in~\eqref{eq:coefficient solution} will hold if we replace the $6J$-symbol by the appropriate Wigner symbol and replace the square root by the $D-2$ root. However, the search of minimizers for these theories as performed in the subsequent analysis might be different.

\subsection{Minimizers}
 \label{subsec:Minimizer-of-the action}

We now seek the minimizers of the action and show that only two parameter regions admit global minimizers.

First, notice that in the case, $m^{2} <0, \ \lambda > 0$, the value of $\left| m^{2} \right|$ can determine, whether or not the action $S_{m,\lambda}$ is bounded from below.
To agree with this, assume the first non-trivial zero of the $6J$-symbol to be at $J_{0}\in \mathfrak{J}_{\text{EL}}$ and choose a function $f \left(x  \right) \doteq f^{J_{0}} \mathcal{X}^{J_{0}} \left( x \right)$ with $f^{J_{0}} \in \mathbb{R} $.
Then, for $\left| m^2 \right| > J_{0}^{2}$ the action evaluated at $f$ yields
\begin{equation}
S_{m,\lambda} \left(f\right)= \left( f^{J_{0}} \right)^{2} \ (J_{0}^{2} - \left| m^{2} \right|) < 0.
\end{equation}
Hence, the action can become arbitrarily negative and thus is unbounded from below. On the other hand, for $\left| m^{2} \right| < J_{0}^{2}$ the action has a global minimum as we will show in the following.

In order to give a general classification of solutions, we need to exclude cases when the $6J$-symbol vanishes.
A quick numerical analysis shows that for $\left| m^{2} \right| \leq 10^{9}$, the space of non-trivial zeros of the $6J$-symbol with $J^{2} \leq \left| m^{2} \right|$ is empty,
Therefore, theorem~\ref{thm:minimizeres} captures all possible solutions
up to this order.
In fact, we conjecture that for equilateral configurations, $\mathfrak{J}_{\text{EL}} / \mathfrak{J}_{\text{EL}}^{S} = \emptyset$, and our theorem holds for any value of $\left| m^{2} \right| $.

\begin{thm}
\label{thm:minimizeres}
Let $\left| m^2 \right|$ be such that for $j\leq j_{\max}$ every $J\in \mathfrak{J}_{\text{EL}}^{S}$ and such that there is no $J \in \mathfrak{J}_{\text{EL}}/\mathfrak{J}_{\text{EL}}^{S}$ such that $J^{2} - |m^{2}| = 0$. Then the equilateral extrema of the dynamical Boulatov action are of the following type:
\begin{enumerate}[label=(\alph*)]
\item For $m^{2}<0,\,\lambda<0$, all extrema are saddle points.
\item For $m^{2}>0,\,\lambda<0$, all non-trivial extrema are saddle points and the trivial extremum, $\varphi=0$, is a local minimizer on $\mathcal{S}_{\text{EL}}$.
\item For $m^{2}>0,\,\lambda>0$ the unique trivial extremum is a global minimizer
on $\mathcal{S}_{\text{EL}}$.
\item For $m^{2}<0,\,\lambda>0$ there are $2^{j_{\max}}$ global
minimizers on $\mathcal{S}_{\text{EL}}$ given by extremal sequences
$C\in\mathcal{E}_{m,\lambda}$ with maximal length, $\ell\left(C\right)=j_{\max}$.
Any other extremum of length $\ell\left(C\right)<j_{\max}$
is a saddle point.
\end{enumerate}
\end{thm}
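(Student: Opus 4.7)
The plan is to exploit the fact that on $\mathcal{S}_{\text{EL}}$ the action $S_{m,\lambda}$ \emph{separates} into a sum of independent one--dimensional quartic potentials, one per equilateral mode, and then reduce the classification of extrema to elementary 1d calculus. First I would substitute $\varphi=\sum_{J\in\mathfrak{J}_{\text{EL}}}\varphi^{J}\mathcal{X}^{J}$ into~\eqref{eq:action}, use the same diagonality that collapses the sum over $K$ in Proposition~\ref{cor:equation of motion if Fourier coefficients} to the single term $K=J$, and thereby rewrite the Tet integral as $\sum_{J}\{6J\}^{2}(\varphi^{J})^{4}$ up to a positive normalisation. This yields
\begin{equation*}
S_{m,\lambda}(\varphi)=\sum_{J\in\mathfrak{J}_{\text{EL}}}V_{J}(\varphi^{J}),\qquad V_{J}(x)=\tfrac{\alpha_{J}}{2}x^{2}+\tfrac{\beta_{J}}{4}x^{4},
\end{equation*}
with $\alpha_{J}$ a positive multiple of $J^{2}+m^{2}$ and $\beta_{J}$ a positive multiple of $\lambda\{6J\}^{2}$. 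The two hypotheses of the theorem ($\{6J\}\ne0$ for $j\le j_{\max}$ and $J^{2}+m^{2}\ne0$ off $\mathfrak{J}^{S}_{\text{EL}}$) then ensure that none of the $V_{J}$ has a degenerate critical point, and the same diagonalisation makes the second variation decouple as $S''_{m,\lambda}(\varphi,f)=\sum_{J}V''_{J}(\varphi^{J})(f^{J})^{2}$.

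Next I would run the 1d analysis of each $V_{J}$. Its critical points are $x=0$ and, when $\alpha_{J}$ and $\beta_{J}$ have opposite signs, $x=\pm|C^{J}|$ from~\eqref{eq:coefficient solution}, with Hessians $V''_{J}(0)=\alpha_{J}$ and $V''_{J}(\pm|C^{J}|)=-2\alpha_{J}$. An extremum $\varphi$ of $S_{m,\lambda}$ is then a local minimum iff the chosen critical point of every $V_{J}$ is itself a local minimum of the corresponding 1d potential. Applying this mode by mode in the four regimes produces the theorem: in $(a)$ the soft modes $J^{2}<|m^{2}|$ are forced to $\varphi^{J}=0$ and contribute the unstable direction $V''_{J}(0)=\alpha_{J}<0$, so every extremum is a saddle; in $(b)$ we have $\alpha_{J}>0$ throughout, so the origin is a strict local minimum while any non--trivial extremum excites a mode at $\pm|C^{J}|$ with $V''_{J}=-2\alpha_{J}<0$ and is a saddle; in $(c)$ each $V_{J}$ is strictly convex and non--negative with unique minimum at $0$, so $\varphi=0$ is the unique extremum and is a global minimum; in $(d)$ the double wells $V_{J}$ ($j\le j_{\max}$) have global minima at $\pm|C^{J}|$ whereas $V_{J}$ ($j>j_{\max}$) is non--negative with minimum at $0$, so an extremum is a local (hence global) minimum iff $\varphi^{J}=\pm|C^{J}|$ for every admissible $j\le j_{\max}$, producing the $2^{j_{\max}}$ degenerate minimisers of maximal length $\ell(C)=j_{\max}$, while any extremum selecting $x=0$ at some admissible $j$ sits on the central maximum of that double well and is a saddle.

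The hard part will be establishing the separable form of $S_{m,\lambda}$ on $\mathcal{S}_{\text{EL}}$ with the precise constants $\alpha_{J},\beta_{J}$. The combinatorics of pairing the six $\delta$--functions inside Tet against the Peter--Weyl modes of four equilateral, left-- and right--invariant fields has to be tracked carefully, although it is a direct extension of the calculation in Appendix~\ref{subsec:proof for corollary 2} and introduces no new phenomena beyond the tetrahedral diagonality already at work in Proposition~\ref{cor:equation of motion if Fourier coefficients}. A secondary point to address is that the candidate minimisers in $(d)$ actually lie in $\mathcal{S}_{\text{EL}}$: this is automatic because they involve only finitely many Peter--Weyl coefficients (all $j\le j_{\max}$), so the defining sequence is trivially rapidly decreasing and the sum converges in the semi--norm topology of Section~\ref{sec:Boulatovmodel}. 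Once the separable structure is in place, everything else is one--dimensional calculus and the four cases fall out from the sign pattern of $\alpha_{J}$ and $\beta_{J}$.
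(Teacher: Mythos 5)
Your overall strategy is the same as the paper's: on $\mathcal{S}_{\text{EL}}$ the Tet kernel is diagonal in the Peter--Weyl basis, the action separates as $S_{m,\lambda}(\varphi)=\sum_J V_J(\varphi^J)$ with $V_J(x)=\tfrac12(J^2+m^2)x^2+\tfrac{\lambda}{4!}\{6J\}^2x^4$, the second variation decouples mode by mode, and the four regimes are read off from the sign pattern of the quadratic and quartic coefficients. The separable form you single out as the hard part is in fact already supplied by the identity expressing the Tet integral of four characters through $\{6J\}^2$ (property 4 of appendix A, the same identity underlying Proposition~\ref{cor:equation of motion if Fourier coefficients}), so that step introduces nothing new.

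The genuine gap is the sentence ``an extremum $\varphi$ is a local minimum iff the chosen critical point of every $V_J$ is a local minimum of the corresponding 1d potential.'' The ``only if'' direction is fine and carries your saddle-point conclusions, but the ``if'' direction is exactly the nontrivial infinite-dimensional step, and it is where the paper's proof of case (b) does real work. Knowing that each $V_J$ has a strict local minimum at $0$ does not by itself produce a semi-norm neighborhood $N_{\epsilon,n}(0)\cap\mathcal{S}_{\text{EL}}$ on which $S_{m,\lambda}\geq 0$: since $\lambda<0$, each $V_J$ becomes negative outside a finite basin, and one must show these basins have a radius bounded away from zero uniformly in $J$, and that a semi-norm ball in $\mathcal{S}_{\text{EL}}$ controls all Fourier coefficients uniformly. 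The paper gets both from the bound $\{6J\}^2\leq 1$ (giving the $J$-independent basin radius $\sqrt{4!m^2/|\lambda|}$) together with Sugiura's theorem that the Peter--Weyl transform is a topological isomorphism onto rapidly decreasing sequences, so that $\|f\|_0<\epsilon$ yields $\sup_J|f^J|<\epsilon$. Your plan should make this estimate explicit; without it, case (b) is not proved. Two smaller points: (i) to conclude ``saddle'' in cases (a) and (b) you must also exhibit a direction violating the maximizer condition (an unexcited mode with $J^2+m^2>0$ does it, and your framework contains this, but you only state the unstable direction); (ii) in case (d) the global-minimum claim is safe precisely because it is a pointwise inequality $S(f)\geq\sum_J\min_x V_J(x)$ needing no topology, while the ``not a local minimizer'' claim for short sequences again needs a perturbation shown to lie in an arbitrary $N_{\epsilon,n}(\varphi)$, which the single-mode deformation $\varphi+\delta\,\mathcal{X}^{J_0}$ provides.
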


\begin{proof}[Proof of theorem~\ref{thm:minimizeres}]
  In the following, let $\varphi \left(x\right)$ denote an extremum
  and let $f\in\mathcal{S}_{\text{EL}}$ be a generic function with
  the Peter-Weyl decomposition given by $f\left(x\right)=\sum_{J\in\mathfrak{J}_{\text{EL}}}f^{J}\,\mathcal{X}^{J}\left(x\right)$.
We remind here that a necessary condition for an extremum $\varphi \left( x \right)$ to be a minimizer (maximizer, respectively) is given by
\bea
S_{m,\lambda}^{''} \left( \varphi, f \right) & \geq 0
\qquad
\left(
S_{m,\lambda}^{''} \left( \varphi, f \right) \leq0 ,\; \text{resp.}
\right),
\eea
for any $f\in\mathcal{S}_{EL}$. In the Peter-Weyl decomposition the second variation
recasts as
\bea
S_{m,\lambda}^{''}\left(\varphi,f\right) & = \sum_{J\in\mathfrak{J}_{\text{EL}}}\left(f^{J}\right)^{2}
\left(
\vphantom{
\sum_{
K\in\mathfrak{J}^{S}_{ \text{EL} }}
} \left(J^{2} + m^{2} \right) -\frac{\lambda}{2}
\sum_{K \in\mathfrak{J}^{S}_{\text{EL}}} \delta_{J,K} \: \varphi^{K} \: \left\{ 6K \right\}^2 \right),
\eea
where $\varphi^{K}$ is the Peter-Weyl coefficient of the extremum $\varphi$.
The above condition is necessary but not sufficient. Nevertheless, it turns out to be useful to exclude some extrema.


\textit{Case (a)} ($m^2 \leq 0, \ \lambda \leq 0$):
By theorem~\ref{thm:extrema of dynamical Boulatov action}, extremal solutions contain only finitely many non-zero Fourier coefficients. Therefore it is possible to find $J_{>} \in \mathfrak{J}_{EL}$ such that $J_{>}^{2} - \left| m \right|  ^{2}>0$ and
$\varphi^{J_{>}}=0$.
Choosing
$
f_{>} \left( x \right)
\doteq
f^{J_{>}} \mathcal{X}^{J_{>}} \left( x \right)
$
the second variation gives
\begin{equation}
S_{m,\lambda}^{''}\left(\varphi, f_{>} \right) = \left( f^{J_{>}} \right)^{2} \left( J^{2}_{>} - \left| m^{2} \right| \right) > 0,
\end{equation}
which violates the maximizer condition.

To see that the minimizer condition is also violated, choose $f_{<} \left(x\right) \doteq f^{J_{<}} \mathcal{X}^{J_{<}} \left(x\right)$ such that $J_{<}^{2} - \left| m^{2} \right| < 0$. Then
the second variation is written as
\begin{equation}
S_{m,\lambda}^{''} \left( \varphi, f_{<} \right)
=
\left( f^{J_{<}} \right)^{2} \left( J_{<}^{2} - \left| m^{2} \right| \right) \leq 0.
\end{equation}
Hence, each  extremum in this parameter region violates the minimizer and the maximizer condition and therefore is a saddle point.

\textit{Case (b)} ($m^{2} \geq 0, \ \lambda \leq 0$):
For the non-trivial minimizer the above argument can also be applied in this case. Choosing the functions $ f_{>} \left( x \right)$ and $f_{<}  \left( x \right)$ as above we find
\be
S_{m,\lambda}^{''}\left(\varphi, f_{>} \right) = \left( f^{ J_{>} } \right)^{2} \left( J^{2}_{>}  + \left| m^{2} \right| \right) > 0,
\quad \text{ and } \quad
 S_{m,\lambda}^{''}\left(\varphi, f_{<} \right) = -2 \left( f^{J_{<}} \right)^{2} \left( J_{<}^{2} + \left| m^{2} \right| \right) < 0.
\ee
Hence, non-trivial extrema are saddle points. For the trivial extremum the second variation of $S_{m,\lambda}$ reads for any $f\in \mathcal{S}_{\text{EL}}$
\begin{equation*}
S^{''}_{m,\lambda} \left( 0, f \right) = \sum_{J\in \mathfrak{J}_{\text{EL}}} \left( f^{J} \right)^{2} \left( J^2 + \left| m^2 \right| \right) \geq 0,
\end{equation*}
and the necessary condition is satisfied. Indeed, the trivial extremum is a local minimum. To prove this we first notice that the Peter-Weyl transform is a topological isomorphism from $\mathcal{S}_{ \text{EL} }$ to the space of rapidly decreasing sequences $\mathcal{S} \left( \mathbb{N} \right)$ with topology given by the family of semi-norms~\citep[theorem 4]{sugiura1971},
\begin{align}
\| \left( f^{J} \right)_{J\in \mathfrak{J}_{\text{EL}}} \|_{n} = \sup_{J\in \mathfrak{J}_{\text{EL}}} \left| J^{n} f^{J} \right|.
\end{align}

The action evaluated at $f$ becomes
\bea
  S_{m,\lambda} \left(f \right) = \sum_{J\in \mathfrak{J}_{\text{EL}}} \left( f^{J} \right)^{2} \left( J^2 + m^2 \right)
    - \frac{\lambda}{4!} \sum_{J\in \mathfrak{J}_{\text{EL}}} \left( f^{J} \right)^{4} \left\{ 6J \right\}^2
\eea
Since the Wigner-$6J$-symbol is upper-bounded by $1$, we can estimate
\bea
   S_{m,\lambda}
   \left(f \right)   \geq
    \sum_{J\in \mathfrak{J}_{\text{EL}}} \left( f^{J} \right)^{2} \left( \left( J^2 + \left| m^2 \right| \right)
- \frac{\lambda}{4!} \left( f^{J} \right)^{2} \right)
\geq  \sum_{J\in \mathfrak{J}_{\text{EL}}} \left( f^{J} \right)^{2} \left(  m^2
  - \frac{\lambda}{4!} \left( f^{J} \right)^{2} \right). \label{eq:action at f}
\eea
Since Peter-Weyl transform is a topological isomorphism, we get for any $f\in \mathcal{S}_{\text{EL}}$ with $\| f \|_{0} \leq \sqrt{ \frac{4! m^2}{\left| \lambda \right|}}$, an estimate on the Fourier coefficients
\begin{equation}
\left| f^{J} \right| \leq \| \left( f^{J} \right)_{J\in \mathfrak{J}_{\text{EL}}} \|_{0} \leq \sqrt{\frac{4!m^2}{\left| \lambda \right|}}.
\end{equation}
Inserting this bound in~\eqref{eq:action at f} we obtain
$
S_{m,\lambda} \left( f \right) \geq 0 = S_{m,\lambda} \left( 0 \right).
$
Hence, in the neighborhood $N_{\epsilon,0} \cap \mathcal{S}_{\text{EL}}$ with
$\epsilon = \sqrt{\frac{4! m^2}{\left| \lambda \right|}}$ the trivial extremum is a minimizer.

\textit{Case (c)} ($m^{2}>0, \ \lambda > 0$):
In this case the space of extremal sequences contains only the zero-sequence, procuring  the trivial extremum $\varphi\left(x\right)=0$.
Denoting the quadratic part of the action in
\eqref{eq:action} by $Q_{m}\left(f\right)$ and the interaction
part by $\lambda I\left(f\right)$ such that
\begin{equation}
S_{m,\lambda}\left(f\right)=Q_{m}\left(f\right)+\lambda I\left(f\right),
\label{eq:action in Q and I}
\end{equation}
we have for any $f\in \mathcal{S}_{\text{EL}}$
\be
Q_{m}\left(f\right)  =\sum_{J\in\mathfrak{J}_{\text{EL}}}\left(f^{J}\right)^{2}\left(J^{2}+m^{2}\right)\geq0\,, \qquad
\lambda I\left(f\right) =\frac{\lambda}{4!}\sum_{J\in\mathfrak{J}_{\text{EL}}}\left(f^{J}\right)^{4}\,\left\{ 6J\right\} ^{2}\geq0.
\ee
Hence,
$S_{m,\lambda}\left(0\right)=0\leq S_{m,\lambda}\left(f\right)\,, \forall f\in\mathcal{S}_{\text{EL}}.
$
We obtain a global minimizer, since the minimal condition is satisfied on the whole $\mathcal{S}_{\text{EL}}$.

\textit{Case (d)} ($m^{2} < 0, \ \lambda >0$):
For any $f\in \mathcal{S}_{\text{EL}}$ the action evaluated at $f$ gives
\bea
S_{m,\lambda}\left(f\right)  =\frac{1}{2}\sum_{J\in\mathfrak{J}_{\text{EL}}}\left(f^{J}\right)^{2}\left(J^{2}-\left|m^{2}\right|\right)
+\frac{\lambda}{4!}\sum_{J\in\mathfrak{J}_{\text{EL}}}\left(f^{J}\right)^{4}\left\{ 6J\right\} ^{2}.
\eea
Splitting $f$ such that $f\left(x\right)=f^{-}\left(x\right)+f^{+}\left(x\right)$
with
\bea
f^{-}\left(x\right)  =\sum_{J \in\mathfrak{J}_{\text{EL}}}^{ \left|J\right|\leq3j_{\max}}f^{J}\,\mathcal{X}^{J}\left(x\right)\,,\qquad
f^{+}\left(x\right)  =\sum_{J\in\mathfrak{J}_{\text{EL}}}^{\left|J\right|>3j_{\max}}f^{J}\,\mathcal{X}^{J}\left(x\right),
\eea
we have
$
S_{m,\lambda}\left(f\right)= S_{m,\lambda} \left( f^{-} + f^{+} \right)\geq S_{m,\lambda}\left(f^{-}\right).
$
Hence, verifying the minimizer condition, it is enough
to show that
$
S_{m,\lambda}\left(\varphi\right)\leq S_{m,\lambda}\left(f^{-}\right).
$
The space of functions
of the form $f^{-}$ is finite-dimensional and we can use the usual
minimization procedure for functions. More specifically, let $s_{J}:\mathbb{R}\to\mathbb{R}$
be a function such that\begingroup \thinmuskip=1mu \medmuskip=2mu
minus 2mu \thickmuskip=3mu
\[
s_{J}\left(f^{J}\right)=\left(f^{J}\right)^{2}\left[\frac{1}{2}\left(J^{2}-\left|m^{2}\right|\right)+\frac{\lambda}{4!}\left(f^{J}\right)^{2}\left\{ 6J\right\} ^{2}\right].
\]
\endgroup
The action $S_{m,\lambda}\left(f^{-}\right)$ is smallest when each $s_{J}$ is minimal on $\mathbb{R}$ for each $J\leq J_{\max}$. Taking the first and second derivative of $s_{J}$ we see that the minimum is achieved by the coefficients $C^{J}$ from~\eqref{eq:coefficient solution}. Hence, an extremum given by an extremal sequence of maximal length is a global minimizer on the whole $\mathcal{S}_{\text{EL}}$.

If $\varphi$ is given by an extremal sequence $C$ of length $\ell\left(C\right)<j_{\max}$,
then there exists a $\mathcal{X}^{J_{0}}$ with $J_{0}\leq J_{\max}$
and $\varphi^{J_0}=0$. For $\delta \in \mathbb{R}$ define the function
$
g\left(x\right)=\varphi\left(x\right)+\delta\cdot\mathcal{X}^{J_{0}}\left(x\right).
$
Inserting $g$ into the action we get
\be
S_{m,\lambda}\left(g\right)  =S_{m,\lambda}\left(\varphi\right)
 +\delta^{2}\left[\frac{1}{2}\left(J_{0}^{2}-\left|m^{2}\right|\right)+\frac{\lambda}{4!}\delta^{2}\,\left\{ 6J_{0}\right\} ^{2}\right].
\ee
If $\delta^{2}$ is in the range $0<\delta<2C^{J_{0}}$ the square bracket
is negative and it follows
$
S_{m,\lambda} \left( g \right) \leq S_{m,\lambda} \left( \varphi \right).
$
Moreover, for any $\epsilon>0$ and $\delta<\frac{\epsilon}{J_{0}^{2n}}$ we
have
\be
\|g-\varphi\|_{n}  =\delta\,\sup_{x\in M}\left|\Delta^{n}\,\mathcal{X}^{J_{0}}\left(x\right)\right|=\delta\,\sup_{x\in M}\left|J_{0}^{2n}\,\mathcal{X}^{J_{0}}\left(x\right)\right|<\epsilon,
\ee
since the characters are bounded by one, $\left| \mathcal{X}^{J_0} \left( x \right) \right| \leq 1$. Hence, $g\in N_{\epsilon,n}\left(\varphi\right)$. For any $\epsilon >0$ choosing $\delta<\min\left(\frac{\epsilon}{J_{0}^{2n}},C^{J_{0}}\right)$
we get
$
S_{m,\lambda}\left(f\right)<S_{m,\lambda}\left(\varphi\right).
$
This shows that we can find a function $g$ in any neighborhood of $\varphi$ that decreases the value of the action, and hence, $\varphi$ is not a minimizer.
\end{proof}

\section{Concluding remarks}
\label{sec:discussion}

We investigated the minimizers of the dynamical Boulatov
action in four different parameter regions of the coupling constants. Our analysis is restricted to the space of smooth,  equilateral, left and right invariant functions, also invariant under cyclic permutations of its variables, $\mathcal{S}_{\text{EL}}$. This restriction ensures that the action is bounded from below for some parameter regions. Moreover, it is motivated by quantum cosmology studies on GFT.

 It appears that the very same restrictions allow us to solve the Euler-Lagrange equations for the dynamical Boulatov action and lead to a complete characterization of minimizers on the restricted space.  Our result characterizes the space of solutions by extremal sequences of finite length and shows that it forms a vector space over  $\mathbb{Z}_{3}$, which is surprising for the set of solutions to
a nonlinear integro-differential equation.
Furthermore, in the most interesting parameter region  $(d)$, the non-vanishing Fourier modes of extremal solutions are bounded by the coupling constant $m^{2}$, which suggests a connection between $m^{2}$ and the area of the triangle of the largest Peter-Weyl mode of the GFT field.

Our analysis shows that the region  $(a)$ does not have any minimizers on $\mathcal{S}_{\text{EL}}$, which makes this parameter region perhaps the least suitable for the definition of the statistical measure in~\eqref{Z}. For the parameter regions  $(b)$
and $(c)$ there is a single (local respectively global) minimizer given by the trivial extremum, $\varphi = 0$. Finally, in the region $(d)$ the action has $2^{j_{\max}}$ degenerate global minimizers, where $j_{\text{max}}$ is a function of the coupling constant $m^{2}$. The rich structure of global minima makes this region most interesting for further investigations, especially for the statistical theory.

On the space of equilateral functions only two possible parameter regions $(c)$ and
$(d)$  allow for the presence of global minimizers and hence could induce a meaningful definition of a non-perturbative statistical measure.

Case  $(c)$ admits a single global minimizer  $\varphi = 0$. Perturbation theory around this minimizer defines the perturbation theory in the coupling constant $\lambda$ and is used in the GFT literature to draw a connection to spin-foam models. Hence, our analysis would suggest that this regime is suitable for such a relation.

Case $(d)$, on the other hand, may suggest more structure for the quantum theory: a degenerate global minimum could lead to instantons or symmetry breaking in the corresponding statistical field theory in the following sense:

\textbf{Instantons:} The full non-perturbative formulation of a model is given by the minimizer of its quantum effective action. The latter is commonly assumed to be convex~\cite{Litim:2006nn} and therefore admits a single, unique minimizer. Hence, the difference between the minimizers of the classical and the quantum effective action becomes apparent, especially in the case when the classical action admits degenerate minimizers. In this case, a perturbative description around any of the minimizers of the classical action does not capture the non-perturbative effects of the theory. In quantum field theory, these non-perturbative effects can be understood as ``tunneling'' between the perturbative vacua, where the instanton action describes the tunneling probability. Thus, the degenerate structure of global minimizers in our case, suggests the necessity of instantons in the statistical formulation of GFT at least for the parameter region (d) (for a similar result see Ref.~\citep{Pithis:2016wzf}).

\textbf{Symmetry breaking:} this mechanism happens when the classical action admits degenerate global minimizers --- related by a symmetry of the classical action --- but the tunneling probability between them vanishes. As we already mentioned, the tunneling probability is described by the instanton action, which in ordinary field theory is often proportional to the volume of the base manifold. On a manifold with a finite volume, the tunneling probability is therefore finite. This often pertains to the statement that spontaneous symmetry breaking cannot occur in quantum field theories in a box. This realization, however, contains further assumptions that are satisfied in ordinary field theories but do not hold for GFT. It has been recently shown that even on the compact base manifold, $M=\mathrm{SU} \left( 2 \right)^{d}$ the tunneling between different perturbative minima can vanish~\cite{Kegeles:2017ems}, leading to a similar phenomenon of symmetry breaking.
In order to talk about symmetry breaking, we need to identify the symmetry, which in our case, is given by a flip of the sign of at least one of the modes in the Peter-Weyl decomposition of the minimizer (this can be modeled as a $\mathbb{Z}_2$-symmetry).
Since the action is of even power in the fields, such a flip will not affect the value of the action and will correspond to a discrete symmetry.
For this reason, it is possible that the global minimizers of the action provoke the breaking of sign-flip symmetry. This needs to be investigated more rigorously in future work.

For ordinary local quantum field theories, a symmetry breaking mechanism can sometimes be related to a phase transition and the formation of a condensate. In particular, this could be the signal of a Bose-Einstein condensation just as expected for quantum cosmology studies in GFT.
A closer look at the solutions found for sector (d) shows that these might bear intriguing perspectives. Indeed, the  `particle' number, used in the condensate cosmology context, is computable in terms of the $L^{2}$-norm of the minimizer. In the present situation, that very number
proves to be bounded by the parameter $m^{2}$:
\be
N \doteq \| \varphi \|_{L^{2}} = \frac{3!}{\lambda} \sum_{J\in \mathfrak{J}_{\text{EL}}^{S}}^{\left| J \right| \leq 3 j_{\text{max}}} \frac{1}{\left\{ 6J \right\}^{2}} \left| J^{2} - \left| m^{2} \right| \right|
 \leq \frac{3! \left| m^{2} \right| }{\lambda}
  \sum_{J\in \mathfrak{J}_{\text{EL}}^{S}}^{\left| J \right| \leq 3 j_{\text{max}}}
   \frac{1}{\left\{ 6J \right\}^{2}}
 \leq \frac{3! \left| m^{2} \right| }{\lambda} \ j_{\text{max}} \  C_{\text{max}},
\ee
with $C_{\text{max}}=\max_{J\in \mathfrak{J}_{\text{EL}}^{S}}
\left( \left\{ 6J \right\}^{-2} \right) $.
For $\left| m^{2} \right| \gg 1$ we can approximate $j_{\text{max}}$ further as $j_{\text{max}}\leq 2\left| m^{2} \right|$ and obtain a simpler bound on  the $L^{2}$-norm of the minimizers
\begin{equation}
  N \leq \frac{12 \left| m^{4} \right|}{\lambda} \ C_{\text{max}}.
\end{equation}
The coupling constant $m^{2}$ (or $|m^{4}|/\lambda\gg 1$) could be large but that itself is not enough to ensure $N=\infty$. Nevertheless, starting from our solutions, a divergent parameter $m^{2}$ is a necessary condition for the divergent $L^{2}$-norm. A large particle number would be desirable for the condensate cosmology approach because such configurations could then be interpreted as to define non-trivial homogeneous and isotropic background geometries in 3d with Euclidean signature. This point deserves further investigations.

We should mention here that our analysis does not capture minimizers with a divergent $L^{2}$-norm
(dealing only with integrable functions), and some modifications will be in order to consider these cases.
 One necessary modification would be to relax the smoothness condition of the minimizers and use the space of tempered distributions instead.
 This could be particularly interesting for GFT models without the Laplace-Beltrami operator, which correspond to a topological BF-theory.  Due to the distributional nature of minimizers their
   $L^{2}$-norm will sometimes diverge making them potentially interesting
 for quantum cosmological studies~\citep{Pithis:2016wzf,Kegeles:2017ems} and spin-foam models~\cite{Fairbairn:2007sv}. The solutions to these GFT models must be addressed differently
but certainly deserve further attention.

There are several models using tensor fields (with interesting properties
such as perturbative renormalizability) which do not impose strong
symmetry conditions on the fields.
These models' interactions could also be radically different from that of Boulatov~\cite{Geloun:2013saa}.
Their corresponding  Euler-Lagrange equation (without $6J$-symbols) still involves a non-linear tensor like equation, and it remains a difficult task to solve them.
In this case, an approach to circumvent the non-linearity and to obtain solution fields which are more general than equilateral configurations is to consider symmetric tensor fields and to decompose the field into its traceless part and the rest, namely vector-like components~\cite{hamermesh2012group}.  Such a decomposition could help to solve the extremal conditions on $\mathcal{S}$ which might find applications in GFT studies of inhomogeneous and anisotropic quantum cosmologies.

On the other hand, the existence of global minima on $\mathcal{S}_{\text{EL}}$ suggests that we can define a self-consistent statistical theory using only this space. This theory could potentially be well-defined due to the bound of the action on $\mathcal{S}_{\text{EL}}$ and may have implications for quantum cosmological studies of GFT.

\subsection*{Acknowledgments}
The authors thank L. Freidel, S. Gielen, E. Livine, D. Oriti and J. Th\"urigen for useful remarks. AP and AK gratefully acknowledge the constant hospitality at the Max-Planck-Institute for Gravitational Physics, Albert Einstein Institute. JBG thanks the Laboratoire de Physique Th\'eorique d'Orsay for its warm hospitality.

\appendix

\section*{Appendix}

\section{Harmonic analysis on $\mathrm{SU}(2)$}
\label{sec:appendix}

This appendix gathers the main identities on the harmonic analysis
on $\mathrm{SU}(2)$ repeatedly used throughout the text.

\subsection{Peter-Weyl transform \label{sec:Peter-Weyl decomposition}}

We briefly recall
the most important properties of the Peter-Weyl transform and
 Wigner matrices, needed for the  harmonic
analysis on $\mathrm{SU}(2)$. Let  $\mathcal{C}^{\infty}\left(\textrm{SU}(2)\right)$  be
the space of smooth functions $f$ on $\textrm{SU}(2)$ which is equipped
with the topology given by semi-norms
\begin{equation}
\|f\|_{n}=\sup_{g\in\textrm{SU}(2)}\left|\Delta^{n}f\left(g\right)\right|,
\end{equation}
with $\Delta$ the Laplace-Beltrami operator and $n\in\mathbb{N}$.

For any $f\in  \mathcal{C}^{\infty}\left(\mathrm{SU}\left(2\right)\right)$
there exists a sequence of complex numbers $\left( f_{mn}^{j}\right) $
with $j\in\frac{\mathbb{N}}{2}$ and $m,n\in\left\{ -j,\ldots,j\right\} $
and $D_{mn}^{j}\left(x\right)$ denote the Wigner matrix coefficients
with $d_{j}=2j+1$ such that
\begin{equation}
\lim_{N\to\infty}\sum_{j=0}^{N}\sum_{m,n=-j}^{j}\sqrt{d_{j}}f_{mn}^{j}D_{mn}^{j}=f,
\end{equation}
in the above topology. The sequence of Fourier coefficients $\left( f_{mn}^{j}\right) $
is rapidly decreasing, i.e.  for any $K\in\mathbb{N}$
\begin{equation}
\sup_{j}\left|j^{K}\sum_{\alpha,\beta=-j}^{j}\bar{f}_{\alpha\beta}^{j}f_{\alpha\beta}^{j}\right|<\infty.\label{eq:Seminorms}
\end{equation}
If we call the space of rapidly decreasing sequences $\mathcal{S}\left(\mathbb{N}\right),$
then~\eqref{eq:Seminorms} defines a family of semi-norms
on $\mathcal{S}\left(\mathbb{N}\right)$ and in the corresponding
topology it becomes a Fr\'echet space. Then the
 Peter-Weyl transform $\mathcal{F} :\mathcal{C}^{\infty}\left(\mathrm{SU}\left(2\right)\right)\to\mathcal{S}\left(\mathbb{N}\right)$
is a topological isomorphism between the space of smooth functions
and the space of rapidly decreasing sequences~\citep{sugiura1971}.

In our work, we deal with functions on three copies of $\textrm{SU}(2)$.
For this reason, we introduce $M=\textrm{SU}(2)^{\times3}$
as a Lie group with points $\left(x_{1},x_{2},x_{3}\right)$. The
representations of $M$ are given by product representations such
that
\begin{align}
\mathfrak{D}^{\left(j_{1},j_{2},j_{3}\right)}:\textrm{SU}(2)^{\times3} & \to L\left(V^{j_{1}}\otimes V^{j_{2}}\otimes V^{j_{3}}\right)
\end{align}
with
$
\mathfrak{D}^{(j_{1},j_{2},j_{3})}=D^{j_{1}}\otimes D^{j_{2}}\otimes D^{j_{3}},
$
where $L \left( V \right)$ denotes the space of linear maps on $V$
 a  vector space.

It follows by the Peter-Weyl theorem that the matrix coefficients
$\mathfrak{D}_{\alpha,\beta}^{J}\left(x\right)$ are dense in the
space of smooth functions on $M$, where now $J,\alpha$ and $\beta$
are multi-indices such that $J=\left(j_{1},j_{2},j_{3}\right)$ with
$j_{1},j_{2},j_{3}\in\frac{\mathbb{N}}{2}$ and $\alpha=\left(\alpha_{1},\alpha_{2},\alpha_{3}\right),\beta=\left(\beta_{1},\beta_{2},\beta_{3}\right)$
such that $\alpha_{i},\beta_{i}\in\left\{ -j_{i},\ldots,j_{i}\right\} $
for $i\in\left\{ 1,2,3\right\} $.

\subsection{Basis for left and right invariant functions\label{subsec:Basis for left and right invariant functions}}

In the above notations,
the left and right invariant functions  on $M=\textrm{SU}(2)^{\times3}$ are given by group
averaging, such that for any $f\in\mathcal{C}^{\infty}\left(M\right)$,
 and any $x =(x_1,x_2,x_3)\in M$,
$
\int\text{d}L\text{d}R\:f\left(Lx_{1}R,Lx_{2}R,Lx_{3}R\right),
$
with $L, R \in \text{SU(2)}$.
 In the Peter-Weyl decomposition
a left and right invariant function $f$ assumes the form
\be\label{eq:PW-decomposition for left and right invariance}
f\left(x\right)  =\sum_{J}\,f^{J}\,\sqrt{d_{j_{1}}d_{j_{2}}d_{j_{3}}}\,
 \int\text{d}h\:\chi^{j_{1}}\left(x_{1}h\right)\chi^{j_{2}}\left(x_{2}h\right)\chi^{j_{3}}\left(x_{3}h\right),
\ee
where  $J=\left(j_{1},j_{2},j_{3}\right) \in (\frac{\mathbb{N}}{2})^{\times 3}$,
and $\chi^{j_{i}}$ denotes the character of the representation of $\mathrm{SU} \left( 2 \right)$ with
dimension $d_{j_{i}}$. We denote the integral of the product of three characters
by
\be
\mathcal{X}^{J}\left(x\right)  \doteq\sqrt{d_{j_{1}}d_{j_{2}}d_{j_{3}}}\,
 \int\text{d}h\:\chi^{j_{1}}\left(x_{1}h\right)\chi^{j_{2}}\left(x_{2}h\right)\chi^{j_{3}}\left(x_{3}h\right).
\ee
It can be easily checked that  $\mathcal{X}^{J}$
has the following properties:
\begin{enumerate}
\item Using the orthogonality of characters, $\int \text{d}h~\chi^{j}(hx_{1})\chi^{l}(x_{2}h)=\frac{\delta_{jl}}{d_{j}}\chi^{j}(x_{2}x_{1}^{-1})$, and reality of  characters,
the $\mathcal{X}^{J}$'s are real-valued and form an orthonormal family with respect
to the $L^{2}\left(M,\text{d}x\right)$ scalar product:
\begin{equation}
\int_{M}\text{d}x\;\mathcal{X}^{J}\left(x\right)\mathcal{X}^{K}\left(x\right)=\delta_{J,K};
\end{equation}
\item  $\mathcal{X}^{J}$ is proportional to the $3J$-Wigner symbol with three equal $j$'s and sum over the magnetic indices and hence vanishes  if $j$ is not an integer;
\item Using~\eqref{eq:PW-decomposition for left and right invariance},
 the family of $\mathcal{X}^{J}$'s is  dense in the space of left and right invariant functions,
such that any left and right invariant function $f$ can be written as
\begin{equation}
f\left(x\right)=\sum_{J \in \mathfrak{J}}\,f^{J}\,\mathcal{X}^{J}\left(x\right);
\end{equation}

\item Using the fact  that the Wigner $6J$-symbol can be defined in terms of characters as~\citep{biedenharn_louck_carruthers_1984},
\begin{equation}
\begin{Bmatrix}l_{01} & l_{02} & l_{03}\\
l_{23} & l_{13} & l_{12}
\end{Bmatrix}^{2}=\int(\text{d}h)^{4}\prod_{i<j}^{3}\chi^{l_{ij}}(h_{j}h_{i}^{-1}).\label{eq:6j and the characters}
\end{equation}
the $6J$-symbol is given by a $\mathcal{X}^{J}$ integral as
\bea
\delta_{j_{1},k_{1}} \delta_{j_{2},l_{1}} \delta_{j_{3},q_{1}} \delta_{q_{2},l_{3}} \delta_{k_{2},q_{3}} \delta_{k_{2},l_{3}}
\begin{Bmatrix}
j_{1} & j_{2} & j_{3}\\
q_{2} & l_{2} & k_{2}
\end{Bmatrix}^{2}
 =\int\text{d}x\text{d}y\text{d}z\text{d}w\,\text{Tet}\left(x,y,z,w\right) \label{eq:6j with mathcal X}
 \mathcal{X}^{J}\left(x\right)\mathcal{X}^{K}\left(y\right)\mathcal{X}^{L}\left(z\right)\mathcal{X}^{Q}\left(w\right) \nonumber,
\eea
with
 $J= (j_1,j_2,j_3), \ K=(k_1,k_2,k_3),\ L=(l_1,l_2,l_3)$, $Q=(q_1,q_2,q_3)$.

\end{enumerate}
Since we are interested in functions that are invariant under cyclic permutation we need to symmetrize the characters $\mathcal{X}^{J} \left( x \right)$.
To achieve this, we introduce the symmetrization operator
\begin{equation}
 P\mathcal{X}^{J} \left( x \right)= \frac{1}{3} \sum_{\sigma \in \text{Cyc}} \mathcal{X}^{\left( j_{\sigma{(1)}}, j_{\sigma{(2)}}, j_{\sigma{(3)}} \right)} \left( x \right),
\end{equation}
where Cyc denotes the set of cyclic permutations of $\{1,2,3\}$.  All aforementioned properties of $\mathcal{X}^{J}$ can be adapted to $P\mathcal{X}^{J} \left( x \right)$ by including a normalized sum over cyclic permutations of indices. Since for the equilateral case we have $P\mathcal{X}^{J} \left( x \right) = \mathcal{X}^{J} \left( x \right)$, we simply use the notation $\mathcal{X}^{J} \left( x \right)$ for symmetric characters on $\mathcal{S}_{\text{EL}}$ and on $\mathcal{S}$.

\section{Proofs}

\subsection{Proof of proposition~\ref{cor:equation of motion if Fourier coefficients}}
\label{subsec:proof for corollary 2}
Consider the action $S_{m,\lambda}$~\eqref{eq:action} and $S'_{m,\lambda}$~\eqref{eq:extremality};
$\mathcal{S}_{\left(\text{EL}\right)}$
means either $\mathcal{S}$ (space of right invariant functions) or $\mathcal{S}_{\text{EL}}$
(space of left and right invariant and equilateral functions).
The following statement holds:
\begin{lem}
\label{lem:variation in D's}The field $\varphi\in\mathcal{S}_{\left(EL\right)}$
is an extremum of $S_{m,\lambda}$ iff
\begin{equation}
S_{m,\lambda}^{'}\left(\varphi,\mathcal{X}^{J}\right)=0
\end{equation}
for all $J\in\mathfrak{J}_{\left(EL\right)}$.
\end{lem}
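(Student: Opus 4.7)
The forward implication is immediate: if $\varphi$ is an extremum then by \eqref{eq:extremality} $S'_{m,\lambda}(\varphi,f)=0$ for every $f\in\mathcal{S}_{(\mathrm{EL})}$, and in particular for the specific choice $f=\mathcal{X}^{J}$, for each $J\in\mathfrak{J}_{(\mathrm{EL})}$. So the content of the lemma is the converse: vanishing on every basis element $\mathcal{X}^{J}$ forces vanishing on every $f$.

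The plan is to rely on two structural features of $S'_{m,\lambda}(\varphi,\cdot)$. \emph{Linearity:} expanding $S_{m,\lambda}(\varphi+tf)$ in the definition \eqref{eq:extremality}, differentiating at $t=0$, and using the cyclic symmetry of $\mathrm{Tet}$, one obtains
\begin{equation*}
S'_{m,\lambda}(\varphi,f)=\int_{M}\text{d}x\,f(x)(-\Delta+m^{2})\varphi(x)+\frac{\lambda}{3!}\int_{M^{\times4}}\text{d}x\text{d}y\text{d}z\text{d}w\,\mathrm{Tet}(x,y,z,w)\,f(x)\varphi(y)\varphi(z)\varphi(w),
\end{equation*}
which is manifestly linear in $f$. \emph{Continuity:} the map $f\mapsto S'_{m,\lambda}(\varphi,f)$ is continuous with respect to the Fr\'echet topology generated by the semi-norms $\|\cdot\|_{n}$. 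Indeed, the quadratic contribution is bounded by $\|f\|_{0}\int|(-\Delta+m^{2})\varphi|$, and the quartic contribution is bounded by $\|f\|_{0}$ times an integral of $|\mathrm{Tet}|\cdot|\varphi|^{3}$ over the compact group $M^{\times 4}$; both finite for any fixed smooth $\varphi$.

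Given linearity and continuity, I would then invoke the Peter--Weyl expansion $f=\sum_{J\in\mathfrak{J}_{(\mathrm{EL})}}f^{J}\mathcal{X}^{J}$, which by \eqref{eq:Peter-Weyl decomposition on S} converges to $f$ in the semi-norm topology on $\mathcal{S}_{(\mathrm{EL})}$. Continuity then permits exchanging $S'_{m,\lambda}(\varphi,\cdot)$ with the infinite sum to conclude
\begin{equation*}
S'_{m,\lambda}(\varphi,f)=\sum_{J\in\mathfrak{J}_{(\mathrm{EL})}}f^{J}\,S'_{m,\lambda}(\varphi,\mathcal{X}^{J}).
\end{equation*}
Under the hypothesis that every summand vanishes, $S'_{m,\lambda}(\varphi,f)=0$ for arbitrary $f\in\mathcal{S}_{(\mathrm{EL})}$, which is precisely the extremality condition.

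The main (and only) obstacle is the continuity argument: one must verify that the two bounds above suffice in the chosen topology. Alternatively, one could sidestep any topological subtlety by applying dominated convergence to the partial-sum sequence $\sum_{J\leq N}f^{J}\mathcal{X}^{J}\to f$, using the rapid decay of $(f^{J})$ together with the fact that $(-\Delta+m^{2})\varphi$ and $\mathrm{Tet}\cdot\varphi^{\otimes3}$ are integrable on $M$ and $M^{\times 4}$ respectively; this version bypasses any use of the Fr\'echet topology and makes the interchange entirely elementary.
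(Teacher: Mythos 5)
Your proposal is correct and follows essentially the same route as the paper: the paper likewise treats the forward direction as immediate, writes $S'_{m,\lambda}(\varphi,\cdot)$ as a linear functional, expands $f$ in the dense family $\{\mathcal{X}^{J}\}$, and justifies the interchange of sum and integral by dominated convergence using a uniform bound on the partial sums $f_{N}$ --- which is precisely the alternative you sketch in your last paragraph. Your primary continuity argument via the $\|\cdot\|_{0}$ semi-norm is just a mild repackaging of that same estimate, so there is no substantive difference.
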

\begin{proof}
Let $\varphi$ be an extremum of $S_{m,\lambda}$, then the ``only if'' direction
is obvious since for any $J\in\mathfrak{J}_{\left(\text{EL}\right)}$
the functions $\mathcal{X}^{J}$ are in $\mathcal{S}_{\left(\text{EL}\right)}$.

For the ``if'' direction we observe the following: since the set
$\left\{ \mathcal{X}^{J}\right\} _{J\in\mathfrak{J}_{\left(\text{EL}\right)}}$
is dense in $\mathcal{S}_{\left(\text{EL}\right)}$, for any $f\in\mathcal{S}_{\left(\text{EL}\right)}$
there exists a family of real numbers $\left\{ f^{J}\right\} _{J\in\mathfrak{J}_{\left(\text{EL}\right)}}$
such that the sequence of functions given for all $N \in \mathbb{N}$ as
\begin{equation}
f_{N}\left(x\right)=\sum_{J\in\mathfrak{J}_{\left(\text{EL}\right)}}^{\left|J\right|<N}f^{J}\,\mathcal{X}^{J}\left(x\right),
\end{equation}
converges to $f$. Then
$
c=\sup_{x\in M}\sup_{N\in\mathbb{N}}\left|f_{N}\left(x\right)\right|,
$
exists and dominates each $f_{N}$ such that, $\left|f_{N}\right|\leq c$.
Moreover, $c$, seen as a constant function on $M$, is integrable
since $M$ is compact.

For any $f\in\mathcal{S}_{\left(\text{EL}\right)}$ the extremal condition
for the action $S_{m,\lambda}$ reads as
\be
S^{'}_{m,\lambda}\left(\varphi,f\right)  =\int_{M}\text{d}x\:f\left(x\right)\left(-\Delta+m^{2}\right)\varphi\left(x\right)
  +\frac{\lambda}{3!}\int_{M^{\times4}}\text{d}x\text{d}y\text{d}z\text{d}w~\text{Tet}\left(x,y,z,w\right)
  f\left(x\right)\varphi\left(y\right)\varphi\left(z\right)\varphi\left(w\right).
\ee
Using the Peter-Weyl decomposition for $f$, we can interchange
the limit and the integral by the dominant convergence theorem
(using the bound $c$) and
obtain
\begin{align*}
S'_{m,\lambda}\left(\varphi,f\right) & =\lim_{N\to\infty}\sum_{J}\,f^{J}\,S^{'}_{m,\lambda}\left(\varphi,\mathcal{X}^{J}\right)=0,
\end{align*}
for any $f\in\mathcal{S}_{\left(EL\right)}$, from which the statement
follows.
\end{proof}

\begin{cor*}
$\varphi\in\mathcal{S}_{\left(\text{EL}\right)}$ is an extremum of
$S$ if and only if the Peter-Weyl coefficients of $\varphi$ ---
denoted by $\varphi^{J}$ --- satisfy for any $J\in\mathfrak{J}_{\left(\text{EL}\right)}$,
\be
 (J^{2}+m^{2})\varphi^{J}
 +\frac{\lambda}{3!}\sum_{\substack{k_{i}}
}\varphi^{j_{1}k_{2}k_{3}}\varphi^{j_{2}k_{3}k_{1}}\varphi^{j_{3}k_{1}k_{2}}\begin{Bmatrix}j_{1} & j_{2} & j_{3}\\
k_{1} & k_{2} & k_{3}
\end{Bmatrix}^{2}=0.
\ee
\end{cor*}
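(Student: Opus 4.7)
The plan is to use Lemma~\ref{lem:variation in D's} to reduce the extremal condition from testing against all $f \in \mathcal{S}_{\left(\text{EL}\right)}$ to testing only against the basis functions $\mathcal{X}^{J}$, and then to identify the resulting coefficients via the $L^{2}$-orthonormality of the $\mathcal{X}^{J}$'s together with the $6J$-identity~\eqref{eq:6j with mathcal X}.

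First, with the Peter-Weyl expansion $\varphi(x) = \sum_{K} \varphi^{K}\,\mathcal{X}^{K}(x)$, I would evaluate $S'_{m,\lambda}(\varphi, \mathcal{X}^{J}) = \partial_{t} S_{m,\lambda}(\varphi + t\mathcal{X}^{J})|_{t=0}$ piece by piece. For the kinetic part the contribution is $\int_{M} \mathcal{X}^{J}(x)(-\Delta + m^{2})\varphi(x)\,\text{d}x$. Since $\mathcal{X}^{K}$ is built from three characters on $\mathrm{SU}(2)$, the Laplace-Beltrami operator acts diagonally as $\Delta \mathcal{X}^{K} = -K^{2}\mathcal{X}^{K}$, so the $L^{2}$-orthonormality of the family $\{\mathcal{X}^{J}\}$ collapses the kinetic piece to $(J^{2}+m^{2})\varphi^{J}$.

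Second, for the interaction, cyclic invariance of the Tet kernel yields four identical contributions upon differentiation, converting the prefactor $\lambda/4!$ into $\lambda/3!$. Inserting the Peter-Weyl expansion for the three remaining $\varphi$-factors gives
\begin{equation}
\frac{\lambda}{3!}\sum_{K,L,Q}\varphi^{K}\varphi^{L}\varphi^{Q}\int \text{Tet}(x,y,z,w)\,\mathcal{X}^{J}(x)\mathcal{X}^{K}(y)\mathcal{X}^{L}(z)\mathcal{X}^{Q}(w).
\end{equation}
Applying~\eqref{eq:6j with mathcal X} converts this four-fold integral into a squared $6J$-symbol multiplied by six Kronecker deltas identifying pairs of components of $J,K,L,Q$. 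These deltas fix one component of each of $K,L,Q$ to be $j_{1},j_{2},j_{3}$ respectively, leaving three free indices that I rename $k_{1},k_{2},k_{3}$; the surviving triple sum is exactly the one appearing in~\eqref{FourierBoulatoveom}. Since both sides of the implication follow from this identification --- the only if direction by $\mathcal{X}^{J}\in\mathcal{S}_{(\text{EL})}$, the if direction by linearity and density as in Lemma~\ref{lem:variation in D's} --- the equivalence is proved.

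The main obstacle is the combinatorial bookkeeping of the six Kronecker deltas in~\eqref{eq:6j with mathcal X}: one must verify that, after their action, the three surviving Peter-Weyl coefficients take precisely the form $\varphi^{j_{1}k_{2}k_{3}}\varphi^{j_{2}k_{3}k_{1}}\varphi^{j_{3}k_{1}k_{2}}$ and that the remaining $6J$-symbol is $\begin{Bmatrix} j_{1} & j_{2} & j_{3} \\ k_{1} & k_{2} & k_{3} \end{Bmatrix}$, rather than some permuted variant. A minor additional care is required concerning the cyclic symmetrization built into $\mathcal{X}^{J}$ on $\mathcal{S}$, although this point trivializes on $\mathcal{S}_{\text{EL}}$ because all three components of $J$ coincide there.
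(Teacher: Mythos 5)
Your proposal is correct and follows essentially the same route as the paper's own proof: reduce to variations along the basis directions $\mathcal{X}^{J}$ via Lemma~\ref{lem:variation in D's}, use orthonormality and the Laplacian eigenvalue for the kinetic term, and apply the character identity~\eqref{eq:6j with mathcal X} together with the cyclic symmetry of the Tet kernel (producing the $4/4!=1/3!$ factor) for the interaction. The only step the paper makes explicit that you leave implicit is the justification (by dominated convergence) for interchanging the infinite Peter-Weyl sum with the integrals, but this is the same technical point already handled in the proof of the lemma.
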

\begin{proof}
From lemma~\ref{lem:variation in D's} the extremal condition is given
by the variation in the basis direction $\mathcal{X}^{J}$ for any $J\in\mathfrak{J}_{\left(\text{EL}\right)}$.
Inserting the Peter-Weyl decomposition of $\varphi$ in the action
$S_{m,\lambda}\left(\varphi\right)$, interchanging the limit with
the integral by the dominant convergence theorem and using the relation
in~\eqref{eq:6j with mathcal X} we obtain the desired statement.
 \end{proof}

\subsection{Proof of theorem~\ref{thm:extrema of dynamical Boulatov action}\label{subsec:Proof-of-theorem 5}}
\begin{thm*}
For any $C\in\mathcal{E}_{m,\lambda}$ the field $\varphi\in\mathcal{S}_{\text{EL}}$
\begin{equation}
\varphi\left(x\right)=\sum_{J\in\mathfrak{J}_{\text{EL}}}C^{J}\,\mathcal{X}^{J}\left(x\right)\label{eq:solutions-1}
\end{equation}
is an extremum of the action $S_{m,\lambda}$. Moreover, every equilateral
extremum of $S_{m,\lambda}$ in $\mathcal{S}_{\text{EL}}$ is of the
above form.
\end{thm*}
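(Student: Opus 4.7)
The plan is to apply proposition~\ref{cor:equation of motion if Fourier coefficients} to convert the extremality condition into an equation on Peter-Weyl coefficients, and then to exploit the equilateral structure to reduce the resulting nonlinear tensor equation to a family of independent cubic equations, one for each $j\in\mathbb{N}$.

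First, I would observe that for $\varphi\in\mathcal{S}_{\text{EL}}$ the Peter-Weyl coefficient $\varphi^{(a,b,c)}$ vanishes unless $a=b=c$. Applied to the product $\varphi^{j_{1}k_{2}k_{3}}\varphi^{j_{2}k_{3}k_{1}}\varphi^{j_{3}k_{1}k_{2}}$ in~\eqref{FourierBoulatoveom} with $J=(j,j,j)$, the first factor forces $k_{2}=k_{3}=j$, the second forces $k_{1}=k_{3}=j$, and the third forces $k_{1}=k_{2}=j$. These conditions are compatible and force $K=J$, so only a single term survives in the sum over $K$, and the equation collapses to the scalar cubic
\begin{equation*}
(J^{2}+m^{2})\,\varphi^{J}+\frac{\lambda}{3!}\,(\varphi^{J})^{3}\,\{6J\}^{2}=0,\qquad J=(j,j,j).
\end{equation*}

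Second, I would solve this cubic case by case. For $J\in\mathfrak{J}_{\text{EL}}^{S}$ (where $\{6J\}\neq 0$) the polynomial factorises as $\varphi^{J}\bigl[(J^{2}+m^{2})+\tfrac{\lambda}{3!}(\varphi^{J})^{2}\{6J\}^{2}\bigr]=0$, yielding either $\varphi^{J}=0$ or the two explicit real roots appearing in~\eqref{eq:coefficient solution}, subject to the reality requirement that the radicand $-\tfrac{3!}{\lambda}(J^{2}+m^{2})$ be non-negative. For $J\not\in\mathfrak{J}_{\text{EL}}^{S}$ the cubic collapses to the linear equation $(J^{2}+m^{2})\varphi^{J}=0$, forcing $\varphi^{J}=0$ except in the fine-tuned case $J^{2}=-m^{2}$, reproducing~\eqref{eq:coefficient solution outside of J}. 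This proves that every equilateral extremum has coefficients of the prescribed form.

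For the converse I would check that every sequence $C\in\mathcal{E}_{m,\lambda}$ actually assembles, via~\eqref{eq:solutions}, into a bona fide element of $\mathcal{S}_{\text{EL}}$: this is guaranteed by the finite-length hypothesis $\ell(C)<\infty$ built into definition~\ref{def:eml}, since a finite linear combination of smooth functions $\mathcal{X}^{J}$ is automatically smooth and has Peter-Weyl coefficients given exactly by $C$. Substituting such a $\varphi$ back into the per-mode cubic then shows that it is indeed an extremum. The main obstacle is the bookkeeping in the first step --- tracking the twist in the indices $j_{1}k_{2}k_{3},\,j_{2}k_{3}k_{1},\,j_{3}k_{1}k_{2}$ carefully enough to see that the equilateral restriction collapses the full tensor sum to the diagonal term $K=J$. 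Once this reduction is in place, the remaining algebra is elementary. A secondary but conceptually important point is that the finite-length assumption cannot be dropped: without it one could produce formal series whose coefficients satisfy~\eqref{eq:coefficient solution} mode by mode but fail to decay rapidly, and hence do not define smooth elements of $\mathcal{S}_{\text{EL}}$.
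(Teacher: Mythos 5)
Your overall route is the same as the paper's: apply proposition~\ref{cor:equation of motion if Fourier coefficients}, use the equilateral support condition to collapse the sum over $K$ in \eqref{FourierBoulatoveom} to the single diagonal term $K=J$, and solve the resulting per-mode cubic $(J^{2}+m^{2})\varphi^{J}+\frac{\lambda}{3!}(\varphi^{J})^{3}\{6J\}^{2}=0$ by cases according to whether $\{6J\}$ vanishes. You actually spell out the index bookkeeping behind the collapse, which the paper dismisses as a ``direct calculation,'' and that part of your argument is correct and complete, as is your verification that a finite-length sequence reassembles into a smooth element of $\mathcal{S}_{\text{EL}}$.

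There is, however, one genuine gap in the ``moreover'' direction. Showing that every equilateral extremum is of the stated form means exhibiting it as $\sum_{J}C^{J}\mathcal{X}^{J}$ with $C\in\mathcal{E}_{m,\lambda}$, and membership in $\mathcal{E}_{m,\lambda}$ includes the condition $\ell(C)<\infty$. Your case analysis only establishes that each individual coefficient takes one of the allowed values; it does not rule out an extremum with infinitely many non-zero modes. You gesture at the resolution in your closing remark --- such a sequence would ``fail to decay rapidly'' --- but that is exactly the point needing an argument: the non-trivial value $|C^{J}|=|\{6J\}|^{-1}\sqrt{-\tfrac{3!}{\lambda}(J^{2}+m^{2})}$ could a priori still decay if $|\{6J\}|$ grew fast enough. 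The paper closes this by invoking the Ponzano--Regge asymptotics $\{6J\}\sim j^{-3/2}$ \cite{Gurau:2008yh}, so that $|C^{J}|\sim j^{5/2}$ grows; alternatively, the bound $|\{6J\}|\leq 1$ together with $J^{2}+m^{2}\sim 3j^{2}$ already forces $|C^{J}|\to\infty$ along any infinite set of non-zero modes. Since an extremum lies in $\mathcal{S}_{\text{EL}}$ and therefore has rapidly decreasing Peter--Weyl coefficients \cite{sugiura1971}, only finitely many modes can be non-zero. You need to state and use such a growth estimate to complete the proof; everything else in your proposal is sound.
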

\begin{proof}
To show that $\varphi$ solves the extremal condition we need to show,
by proposition~\ref{cor:equation of motion if Fourier coefficients},
that each $C^{J}$ satisfies~\eqref{FourierBoulatoveom},
which follows by direct calculation.

Conversely, every equilateral function can be written as
\begin{equation}
f\left(x\right)=\sum_{J\in\mathfrak{J}_{\text{EL}}}A_{J}\,\mathcal{X}^{J}\left(x\right),
\end{equation}
with $\left(A^{J}\right)_{J\in\mathfrak{J}_{\text{EL}}}$ being a
rapidly decreasing sequence~\cite{sugiura1971}. Using proposition
\ref{cor:equation of motion if Fourier coefficients}, we find that
the extremal solutions have coefficients $A^{J}$ which satisfy
\begin{equation}
A^{J}\in \left\{\pm\frac{1}{\left|\left\{ 6J\right\} \right|}\sqrt{-\frac{3!}{\lambda}\left(J^{2}+m^{2}\right)},0 \right\},
\end{equation}
or $A^{J}\in \mathbb{R}$ for $J\in \mathfrak{J}_{\text{EL}} / \mathfrak{J}_{\text{EL}}^{S}$ with $J^2 + m^2 =0$.
If $A^{J}$ is not trivial we can estimate its growth using the asymptotic behavior of $6J$-symbols~\cite{Gurau:2008yh} as
\begin{equation}
A^{J}\sim\frac{j}{\left|\left\{ 6J\right\} \right|}\sim j^{\frac{5}{2}}.
\end{equation}

However, for $\left(A^{J}\right)_{J\in\mathfrak{J}_{\text{EL}}}$
to be a rapidly decreasing sequence, the coefficients have to satisfy
for any $n\in\mathbb{N}$,
\begin{equation}
\lim_{j\to\infty}\left|j\right|^{n}\left|A^{J}\right|\to0.
\end{equation}
This is only possible if $A^{J}=0$ for all but finitely
many $J\in\mathfrak{J}_{\text{EL}}$.
\end{proof}


\end{document}